\newcommand\norm[1]{\left\lVert#1\right\rVert}
\newcommand{\bx}{{ \mathbf{x} }}
\newcommand{\bn}{{ \mathbf{n}}}
\newcommand{\br}{{\mathbf{r}}}
\newcommand{\bu}{{ \mathbf{u}}}
\newcommand{\bJ}{{ \mathbf{J}}}
\newcommand{\bv}{{\mathbf{u}}}
\newcommand{\bS}{{\mathbf{S}}}
\newcommand{\bI}{{\mathbf{I}}}
\newcommand{\bV}{{\mathbf{V}}}
\newcommand{\be}{\begin{equation}}
\newcommand{\ee}{\end{equation}}
\newcommand{\lb}{\label}
\newcommand{\bzed}{{\bf 0}}
\newcommand{\btau}{{\mbox{\boldmath $\tau$}}}
\newcommand{\biota}{{\mbox{\boldmath $\iota$}}}
\newcommand{\bomega}{{\mbox{\boldmath $\omega$}}}
\newcommand{\bphi}{{\mbox{\boldmath $\phi$}}}
\newcommand{\bpsi}{{\mbox{\boldmath $\psi$}}}
\newcommand{\btimes}{{\mbox{\boldmath $\,\times\,$}}}
\newcommand{\bdot}{{\mbox{\boldmath $\,\cdot\,$}}}
\newcommand{\bdots}{{\mbox{\boldmath $\,:\,$}}}
\newcommand{\grad}{{\mbox{\boldmath $\nabla$}}}
\newcommand{\cE}{\mathcal{E}}
\newcommand{\cT}{{\mathcal{T}}}
\newcommand{\cN}{{\mathcal{N}}}
\newcommand{\ext}{\textbf{Ext}}
\journalname{Communications in Mathematical Physics}
\begin{document}

\title{{Onsager Theory of Turbulence,\\
the Josephson-Anderson Relation,\\
and the D'Alembert Paradox}}
\titlerunning{Onsager, Josephson, D'Alembert}
\author{Hao Quan\inst{1} and Gregory L. Eyink \inst{1}\fnmsep\inst{2}}
\institute{Department of Applied Mathematics \& Statistics\\ The Johns Hopkins University, Baltimore, MD 21218, USA\\ \email{haoquan@jhu.edu} \and 
 Department of Physics and Astronomy \\ The Johns Hopkins University, Baltimore, MD 21218, USA\\
 \email{eyink@jhu.edu}}
\authorrunning{H. Quan \& G. Eyink}

\date{\today}
\communicated{???}

\maketitle
\begin{abstract}
The Josephson-Anderson relation, valid for the incompressible Navier-Stokes solutions which describe flow around 
a solid body, instantaneously equates the power dissipated by drag to the flux of vorticity across the flow lines of the 
potential Euler solution considered by d'Alembert. Its derivation involves a decomposition of the velocity field 
into this background potential-flow field and a solenoidal field corresponding to the rotational wake behind the body,
with the flux term describing transfer from the interaction energy between the two fields and into kinetic energy of the 
rotational flow. We establish the validity of the Josephson-Anderson relation for the weak solutions of the Euler 
equations obtained in the zero-viscosity limit, with one transfer term due to inviscid vorticity flux and 
the other due to a viscous skin-friction anomaly. Furthermore, we establish weak forms of the local 
balance equations for both interaction and rotational energies. 
We define nonlinear spatial fluxes of these energies and show 
that the asymptotic flux of interaction energy to the wall equals the anomalous skin-friction term in the 
Josephson-Anderson relation. However, when the Euler solution satisfies suitably the no-flow-through 
condition at the wall, then the anomalous term vanishes. In this case, we 
can show also that the asymptotic flux of rotational energy to the wall must vanish and 
we obtain in the rotational wake the Onsager-Duchon-Robert relation between viscous dissipation anomaly 
and inertial dissipation due to scale-cascade. In this way we establish a precise connection between 
the Josephson-Anderson relation and the Onsager theory of turbulence, and we provide a novel resolution of the d'Alembert paradox. 
\end{abstract}

\section{Introduction}

\subsection{Josephson-Anderson Relation for Flow Around a Solid Body}
The Josephson-Anderson relation was first derived for voltage-drops 
in superconductors \cite{josephson1965potential} and chemical potential differences in superfluids
\cite{anderson1966considerations}, relating these to transverse motion of quantized vortices.
A ``detailed relation'' of Huggins \cite{huggins1970energy} further connected 
superfluid dissipation to flux of vortices across streamlines of the 
background potential flow and was applied by Huggins also to classical turbulent channel flow 
described by the incompressible Navier-Stokes equation \cite{huggins1994vortex}. 

Recently, the detailed Josephson-Anderson relation was extended to incompressible 
Navier-Stokes solutions describing flow around a smooth, solid body \cite{eyink2021josephson}.
This result made a direct connection with the classical paradox raised by work of d'Alembert 
\cite{dalembert1749theoria,dalembert1768paradoxe}, who showed that the potential Euler solution 
$\bv_\phi=\grad\phi$ for flow around the body $B$ predicts no drag. See Fig.~\ref{fig1} for the 
context. Following the earlier work on superfluids,
\cite{eyink2021josephson} introduced a ``rotational velocity'' $\bv_\omega^\nu:=\bv^\nu-\bv_\phi$ 
which accounts for all flow vorticity $\bomega^\nu=\grad\btimes\bv^\nu$ and which satisfies the equation 
\begin{eqnarray}
\partial_t\bv_\omega^\nu 
&=& \bv^\nu\btimes\bomega^\nu-\nu\grad\btimes\bomega^\nu -
\grad\left(p_\omega^\nu+\frac{1}{2}|\bv_\omega^\nu|^2+\bv_\omega^\nu\bdot\bv_\phi\right), \cr
&=& -\grad\bdot (\bv_\omega^\nu\otimes\bv_\omega^\nu+\bv_\omega^\nu\otimes\bv_\phi+\bv_\phi\otimes\bv_\omega^\nu) 
-\grad p_\omega^\nu -\nu\grad\btimes\bomega^\nu, 
\lb{NS-omega-mom2} \end{eqnarray} 
expressing local conservation of the ``vortex momentum'' ${\bf P}_\omega^\nu=\rho\int_\Omega \bv_\omega^\nu\, dV.$ 
Here the pressure $p_\omega^\nu$ is to be determined by the 
divergence-free constraint $\grad\bdot\bv_\omega^\nu=0$ and $\rho$ is mass density of the fluid. 
The formulation \eqref{NS-omega-mom2} of the incompressible Navier-Stokes 
equation gives a precise mathematical description of the rotational wake that develops behind the body. 
In this setting, the Josephson-Anderson relation follows by considering balance equations for the 
interaction energy between potential and rotational flow, $E_{int}^\nu=\rho\int_\Omega \bv_\omega^\nu\bdot\bv_\phi\,dV,$ 
and the kinetic energy in the rotational flow itself, $E_\omega^\nu=(\rho/2)\int_\Omega |\bv_\omega^\nu|^2 \,dV,$
that is: 
\begin{eqnarray}
&& \partial_t\left(\bv_\phi\bdot \bv_\omega^\nu \right) +
\grad\bdot\left[\left(p_\omega^\nu+\frac{1}{2}|\bv_\omega^\nu|^2 +\bv_\omega^\nu\bdot\bv_\phi\right)\bv_\phi 
 +\left(p_\phi+\frac{1}{2}|\bv_\phi|^2\right) \bv_\omega^\nu\right] \cr
&& \qquad\qquad\qquad\qquad = +\bv_\phi\bdot(\bv^\nu\btimes\bomega^\nu-\nu\grad\btimes\bomega^\nu) \cr 
&& 
\lb{Eint-loc} \end{eqnarray} 
and 
 \begin{eqnarray}
&& \partial_t\left(\frac{1}{2}|\bv_\omega^\nu|^2\right) +
\grad\bdot\left[\left(p_\omega^\nu+\frac{1}{2}|\bv_\omega^\nu|^2 +\bv_\omega^\nu\bdot\bv_\phi\right)
\bv_\omega^\nu-
\nu\bv^\nu\btimes\bomega^\nu\right] \cr
&& \qquad\qquad\qquad = -\bv_\phi\bdot(\bv^\nu\btimes\bomega^\nu-\nu\grad\btimes\bomega^\nu)-\nu|\bomega^\nu|^2. 
\lb{Eom-loc} \end{eqnarray} 
\begin{figure}
\center
\includegraphics[width=.75\textwidth]{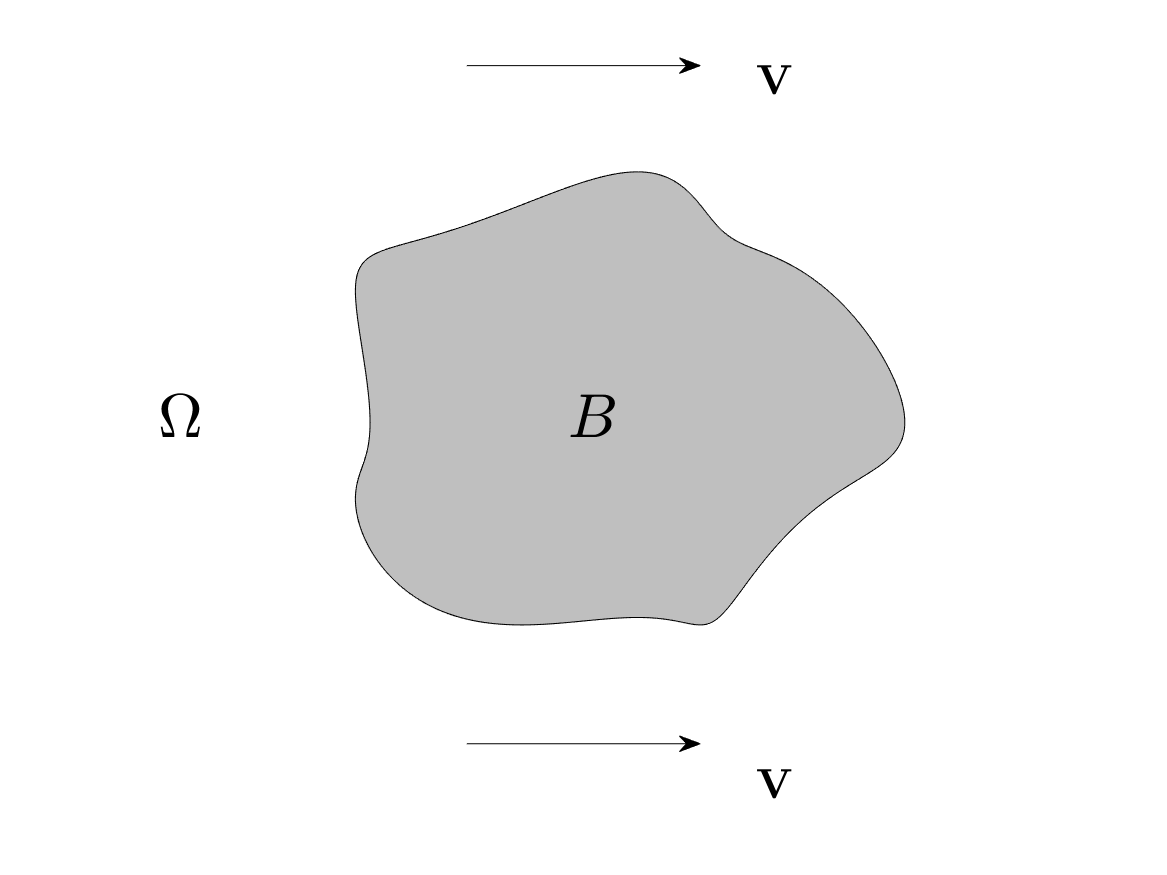}
\caption{Flow around a finite body $B$ in an unbounded region $\Omega$
filled with an incompressible fluid moving at a velocity ${\bf V}$ at far distances.} 
\label{fig1} \end{figure}
Clearly, the space-integrated expression 
\be {\mathcal T}^\nu= -\rho\int_\Omega \bv_\phi\bdot(\bv^\nu\btimes\bomega^\nu-\nu\grad\btimes\bomega^\nu)\,dV 
\lb{trans} \ee 
represents a transfer from the interaction energy to the energy of rotational flow,
which is thereafter disposed by viscosity, and it can be interpreted physically in terms 
of the flux of vorticity across the streamlines of the potential flow \cite{huggins1970energy,eyink2021josephson}. 
This quantity has furthermore a remarkable instantaneous relation to the drag force of the fluid 
acting on the body: 
\be {\bf F}^\nu= \rho\int_{\partial B} (-p_\omega^\nu\bn+\btau_w^\nu) dA \lb{force} \ee 
where viscous skin friction or wall stress is given by $\btau_w^\nu=2\nu\bS^\nu\bn=\nu\bomega^\nu\btimes\bn,$
$\bn$ is the unit normal on $\partial B$ pointing into $\Omega,$  and 
$S_{ij}^\nu=(1/2)(\partial_iu_j^\nu+\partial_ju_i^\nu)$ is the strain-rate tensor. The Josephson-Anderson 
relation derived in \cite{eyink2021josephson} states that the transfer \eqref{trans} is 
equal to $D^\nu:={\bf F}^\nu\bdot\bV,$ the power consumption from drag, instantaneously in time: 
\begin{eqnarray} 
D^\nu &=& -\rho\int_\Omega \bv_\phi\bdot(\bv^\nu\btimes\bomega^\nu-\nu\grad\btimes\bomega^\nu) dV \cr 
&=& -\rho \int_\Omega \grad\bv_\phi\bdots\bv_\omega^\nu\otimes\bv_\omega^\nu\, dV + 
\rho\int_{\partial\Omega} \bv_\phi\bdot\btau_w^\nu\, dA
\lb{dJA-bd} \end{eqnarray} 
The expression for ${\mathcal T}^\nu$ in the second line is obtained from the first by 
simple vector calculus identities and spatial integration by parts. 

As already remarked in \cite{eyink2021josephson}, the latter form of the Josephson-Anderson 
relation should be valid even for weak Euler solutions $\bv$ obtained as a zero-viscosity limit
of the Navier-Stokes solution $\bv^\nu$. It is the purpose of the present paper to derive relation 
\eqref{dJA-bd} rigorously in the limit $\nu\to 0,$ under reasonable hypotheses which 
are sufficient for a weak Euler solution to exist \cite{drivas2019remarks}. 
This approach is obviously connected with the ``ideal turbulence'' theory of Onsager 
\cite{onsager1949statistical}, who posited that infinite Reynolds-number turbulent flows
will be described by weak Euler solutions dissipating kinetic energy. For reviews, 
see \cite{eyink2018review,delellis2019turbulence}. In particular, we shall follow 
the approach of Duchon \& Robert \cite{duchon2000inertial} to show that the local 
energy balance equations \eqref{Eint-loc}, \eqref{Eom-loc} remain valid in the 
distributional sense for the inviscid limit. The sink term $Q^\nu(\bv^\nu)=
\nu|\bomega^\nu|^2$ which appears in Eq.\eqref{Eom-loc} and which describes the viscous
dissipation in the rotational wake behind the body will be shown to converge
to a non-negative distribution $Q(\bv)$ which corresponds to ``inertial energy dissipation''
or nonlinear energy cascade in the Euler solution. We thus derive a precise 
connection in the inviscid limit between the Josephson-Anderson relation and 
Onsager's dissipative anomaly. 

\subsection{Prior Work} 
Our analysis here will depend essentially upon the approach and results 
in our preceding paper \cite{quan2022inertial}, which studied the conditions 
necessary for a momentum anomaly to exist in wall-bounded turbulence, a possibility 
conjectured already by Taylor \cite{taylor1915eddy}. As remarked at the end of \cite{quan2022inertial},
all of the main results of that work carry over to the local balance of vortex 
momentum given by Eq.\eqref{NS-omega-mom2}. As there, we assume that those equations 
admit strong solutions for arbitrarily large Reynolds numbers.  Then, Theorem 1 of \cite{quan2022inertial} 
implies in this context that distributional limits of the wall shear stress $\btau_w^\nu$
and the rotational pressure stress $p_{\omega,w}^\nu\bn$ at the wall exist for $\nu\to 0:$ 
\begin{align}
    \btau^{\nu}_w &\xrightarrow{\nu\to0} \btau_w \text{ in } D'((\partial B)_T,\mathcal{T}(\partial B)_T) \lb{tau-lim}\\
     p^{\nu}_\omega\mathbf{n} &\xrightarrow{\nu\to0} p_{\omega,w}\mathbf{n} \text{ in } D'((\partial B)_T,\mathcal{N}(\partial B)_T) \lb{p-lim}
\end{align}
More precisely, these limits exist as distributional sections of the tangent and normal bundles, 
respectively, of the space-time manifold $(\partial B)_T=\partial B\times (0,T),$ where we 
assume that $B\subset {\mathbb R}^3$ is closed, bounded, and connected, the complement 
$\Omega={\mathbb R}^3\backslash B$ is also connected, and the common boundary 
$\partial\Omega=\partial B$ is a $C^\infty$ manifold embedded in ${\mathbb R}^3.$
See \cite{quan2022inertial}, section 2 for our notations and conventions regarding distribution theory 
on manifolds. The limit results \eqref{tau-lim},\eqref{p-lim} require hypotheses regarding 
the strong convergence $(\bv_\omega^\nu,p_\omega^\nu)\to (\bv_\omega,p_\omega)$ as $\nu\to 0,$
given precisely in the Theorem \ref{theorem1} below, and the limits are then weak solutions of 
the inviscid version of Eq.\eqref{NS-omega-mom2}:
\begin{eqnarray}
\partial_t\bv_\omega
&=& -\grad\bdot (\bv_\omega\otimes\bv_\omega+\bv_\omega\otimes\bv_\phi+\bv_\phi\otimes\bv_\omega) 
-\grad p_\omega, \quad\grad\bdot\bv_\omega=0. \cr
&& 
\lb{E-omega-mom2} \end{eqnarray} 

The second main result of \cite{quan2022inertial} had to do with the spatial cascade of momentum 
and its matching to stress at the wall. In the present context, the spatial momentum-flux or stress 
consists of the advective contribution: 
\be \mathbf{T}_\omega \bdots=\mathbf{u}_\omega\otimes\mathbf{u}_\omega 
    +\mathbf{u}_\omega\otimes\mathbf{u}_\phi+\mathbf{u}_\phi\otimes\mathbf{u}_\omega \ee 
and the isotropic pressure contribution $p_\omega\bI.$ A turbulent cascade of rotational momentum 
is defined for the infinite-$Re$ Euler solution by introducing fields $\bar{\mathbf{T}}_{\omega,\ell}
=G_\ell*\mathbf{T}_\omega,$ $\bar{p}_{\omega,\ell}=G_\ell*p_\omega$ spatially coarse-grained at length-scale $\ell$
with a standard mollifier $G$ and a smooth window function $\eta_{h,\ell}(\bx)=\theta_{h,\ell}(d(\bx))$
which =0 when distance to the boundary $d(\bx)<h$ for $h>\ell.$ The turbulent flux of momentum 
toward the wall is thus defined by the quantity
\be - (\grad\eta_{h,\ell}\bdot\Bar{\mathbf{T}}_{\omega,\ell} 
    + \Bar{p}_{\omega,\ell}\grad\eta_{h,\ell}). \lb{momflux} \ee 
To justify this physical interpretation it is important to note that for sufficiently small     
$\epsilon<\eta(\Omega),$ for $\Omega_\epsilon:=\{\bx\in \Omega:\, d(\bx)<\epsilon\},$ and
for any $\bx\in\bar{\Omega}_{\epsilon}$, there exists a unique point $\pi(\bx)\in\partial \Omega$ such that
        \begin{align}
            d(\bx) = |\bx - \pi(\bx)|, \;\;\;\; \grad d(\bx) = \bn(\pi(\bx)). 
        \end{align}
Thus, $\grad\eta_{h,\ell}(\bx)=\theta_{h,\ell}'(d(\bx))\bn(\pi(\bx))$ is the product of an approximate 
delta function $\theta_{h,\ell}'(d(\bx))\simeq \delta(d(\bx)-h)$ and the wall normal vector  
$\bn(\bx):=\bn(\pi(\bx))$ extended into the interior, so that \eqref{momflux} indeed measures 
momentum transfer toward the wall at distance $h.$ See section 2 in \cite{quan2022inertial} 
for a detailed discussion of these results. Note also for later use that the unit 
normal vector $\bn$, projection map $\pi$ and distance function $d$ all belong to 
$C^\infty(\bar{\Omega}_\epsilon)$ for each $\epsilon<\eta(\Omega).$  
    
With this background, 
Theorem 2 of \cite{quan2022inertial} states that the wall-parallel component of the flux \eqref{momflux}  
converges in the limit $\ell<h\to 0$ to the wall tangential stress $\btau_w$ and that the wall-normal 
component converges to the wall pressure stress $-p_{\omega,w}\bn.$ For a precise statement, one must 
be able to interpret \eqref{momflux} as a sectional distribution of the tangent bundle and also of the 
normal bundle. For this purpose, \cite{quan2022inertial} introduced a non-standard space of test functions 
\begin{eqnarray}
    \Bar{D}(\Bar{\Omega}\times(0,T)) &:=& \left\{\varphi = \phi|_{\Bar{\Omega}\times(0,T)}:\, 
    \phi\in C_c^{\infty}(\mathbb{R}^3\times(0,T)),\right. \cr 
    && \left. \vspace{3in}\text{supp}(\phi)\cap\left(\Omega\times (0,T)\right)\ne\emptyset\right\}
\end{eqnarray} 
which are non-vanishing on $\partial\Omega$, and also sets ${\mathcal E}_T$ and ${\mathcal E}_N$ 
of {\it extension operators} where $\textbf{Ext}\in {\mathcal E}_T$ is a continuous, linear map  
$\textbf{Ext}: \bpsi\in D((\partial B)_T,\mathcal{T}^*(\partial B)_T) \mapsto 
\bphi \in \Bar{D}(\Bar{\Omega}\times (0,T),\mathbb{R}^3)$ such that the restriction 
$\bphi|_{(\partial B)_T}$ agrees with $\bpsi,$ and likewise 
$\textbf{Ext}\in {\mathcal E}_N$ is a continuous, linear map  
$\textbf{Ext}: \bpsi\in D((\partial B)_T,\mathcal{N}^*(\partial B)_T) \mapsto 
\bphi \in \Bar{D}(\Bar{\Omega}\times (0,T),\mathbb{R}^3)$ such that 
$\bphi|_{(\partial B)_T}=\bpsi.$ The precise statement of 
Theorem 2 in \cite{quan2022inertial} applied to vortex momentum is that for all $\textbf{Ext}\in {\mathcal E}_T$          
    \be   - \lim_{h,\ell\to0}\textbf{Ext}^*(\grad\eta_{h,\ell}\bdot\Bar{\mathbf{T}}_{\omega,\ell} 
    + \Bar{p}_{\omega,\ell}\grad\eta_{h,\ell}) = \btau_w \mbox{ in }D'((\partial B)_T,\mathcal{T}(\partial B)_T)\label{tangentialLimit} \ee
and for all $\textbf{Ext}\in {\mathcal E}_N$ 
    \be
        -\lim_{h,\ell\to0} \textbf{Ext}^*(\grad\eta_{h,\ell}\bdot\Bar{\mathbf{T}}_{\omega,\ell} 
        + \Bar{p}_{\omega,\ell}\grad\eta_{h,\ell}) = -p_{\omega,w}\mathbf{n} \mbox{ in }D'((\partial B)_T,\mathcal{N}(\partial B)_T). \label{normalLimit} \ee
        
The final main result of \cite{quan2022inertial} was that, under some additional tenable 
hypotheses, the advective flux of vortex momentum to the wall must vanish. The added 
assumptions were, for some $\epsilon>0,$ the near-wall boundedness property 
            \begin{align}
                \mathbf{u}_\omega \in L^2((0,T),L^{\infty}(\Omega_{\epsilon}))
            \end{align}
with $\Omega_\epsilon:=\{\bx\in \Omega:\, {\rm dist}(\bx,\partial B)<\epsilon\}$ and the no-flow-through condition 
      \begin{align}
            \lim_{\delta\to0}\norm{\mathbf{n}\bdot\mathbf{u}_\omega}_{L^2((0,T),L^{\infty}(\Omega_{\delta}))} = 0.
        \end{align}
Then Theorem 3 of \cite{quan2022inertial} implies that,  for all $\textbf{Ext}\in \Tilde{\cE}_{\cT},$
\be
-\lim_{h,\ell\to 0} \textbf{Ext}^*(\grad\eta_{h,\ell}\bdot\Bar{\mathbf{T}}_{\omega,\ell}) 
\,= \btau_w\,= \bzed \mbox{ in } D'((\partial B)_T, \mathcal{T}(\partial B)_T)
\lb{tauzed} 
\ee 
and for all $\textbf{Ext}\in {\mathcal E}_{\cN},$
\be 
-\lim_{h,\ell\to 0} \textbf{Ext}^*(\Bar{p}_{\omega,\ell}\grad\eta_{h,\ell}) = -p_{\omega,w}\mathbf{n}, \mbox{ in } D'((\partial B)_T, \mathcal{N}(\partial B)_T). 
\lb{pcont} 
\ee 
Here $\tilde{\mathcal{E}}_{\mathcal{T}}$ is a class of {\it natural extensions} which consists of those 
$\textbf{Ext}\in\mathcal{E}_{\mathcal{T}}$ such that $\forall\bpsi\in D((\partial B)_T,\mathcal{T}^*(\partial B)_T)$, $\bphi = \textbf{Ext}(\bpsi)$ 
satisfies 
   \begin{align}
        \norm{\bphi\bdot\mathbf{n}}_{L^{\infty}((\Omega_{h+\ell}\backslash\Omega_h)\times(0,T))}=O(\ell),
        \quad h,\ell\to0,\label{extCond1}
    \end{align}
which allows to infer that $\Bar{p}_{\omega,\ell}\grad\eta_{h,\ell}$ gives vanishing contribution to the 
tangential wall stress in \eqref{tangentialLimit}, as would be naturally expected. 
It follows then from \eqref{tauzed} that there is no viscous momentum anomaly, $\btau_w=\bzed,$ so that all drag originates
asymptotically from pressure forces and, furthermore, from \eqref{pcont} the pressure of the Euler solution away from the wall 
matches onto the pressure at the wall. 

The present work will build on these these results for vortex momentum and extend them to rotational kinetic energy 
and to interaction energy of potential and rotational flow, obtaining the infinite Reynolds-number limit of the 
Josephson-Anderson relation in the process. Because kinetic energies are scalar quantities, we can simplify 
somewhat the approach of \cite{quan2022inertial}. Following \cite{wagner2010distributions}, we denote by $D((\partial B)_T)$ 
the $C^\infty$ sections with compact support of the trivial scalar bundle $(\partial B)_T\times {\mathbb R}$ and by 
$D'((\partial B)_T)$ the sectional distributions of that bundle. We refer to these simply as spaces of scalar test functions 
and of scalar distributions on $(\partial B)_T,$ respectively. We then define the class of scalar extensions  
${\mathcal E}$ as the set of continuous, linear maps   
$\textbf{Ext}: \psi\in D((\partial B)_T) \mapsto 
\phi \in \Bar{D}(\Bar{\Omega}\times (0,T))$ such that the restriction 
$\phi|_{(\partial B)_T}=\psi.$ This set ${\mathcal E}$ is defined more precisely 
in the following section \ref{prelim}, where it is shown also that ${\mathcal E}\neq\emptyset$ by 
construction of a concrete example. We shall need further in our proof the fact that the stationary Euler solution 
describing potential flow around the body $B$ with asymptotic velocity $\bV$ at infinity
satisfies $\bv_\phi\in C^\infty(\bar{\Omega}),$
which follows from the known smoothness of the solution of the Laplace equation with zero Neumann 
conditions on a domain $\Omega$ with smooth boundary $\partial \Omega.$  We shall review these
results also in section \ref{prelim}. Since $\bv_\phi\bdot\bn=0$ on $\partial B$ and since 
$\partial B$ is compact, it follows that we may interpret $\bv_\phi|_{\partial B}\in D((\partial B)_T,{\mathcal T}^*(\partial B)_T).$
Thus, the dot product with the distribution $\btau_w\in D'((\partial B)_T,{\mathcal T}(\partial B)_T)$ 
obtained by Theorem 1 of \cite{quan2022inertial} can be defined with $\bv_\phi\bdot\btau_w\in D'((\partial B)_T)$ 
by setting
\be \langle \bv_\phi\bdot\btau_w,\psi\rangle:=\langle\btau_w,\psi\bv_\phi|_{\partial B}\rangle, 
\quad \forall \psi \in D((\partial B)_T). \ee

\subsection{Results on Interaction Energy and Josephson-Anderson Relation}

We can now state the first main result of the present work, which is analogous to Theorem 1 
of \cite{quan2022inertial} for vortex momentum, but here on the infinite Reynolds-number limits 
of the local balance of the interaction energy \eqref{Eint-loc} and of the Josephson-Anderson 
relation \eqref{dJA-bd}: 

\begin{theorem}\label{theorem1} 
We make the following assumptions (which are same as those in Theorem 1 of \cite{quan2022inertial}):
    Let $(\mathbf{u}^{\nu}_\omega, p^{\nu}_\omega)$ be strong solutions of Eq.\eqref{NS-omega-mom2} on $\Bar{\Omega}\times (0,T)$ for $\nu>0$. 
    Assume that
    $(\mathbf{u}^{\nu}_{\omega})_{\nu>0}$ strongly converges to $\mathbf{u}_\omega$ in $L^2((0,T),L_{\text{loc}}^2(\Omega)):$
        \be
            \mathbf{u}^{\nu}_{\omega}\xrightarrow[L^2((0,T),L_{\text{loc}}^2(\Omega))]{\nu\to0}\mathbf{u}_\omega. \label{L2Conv}
        \ee    
    and that $(p^{\nu}_{\omega})_{\nu>0}$ strongly converges to $p_\omega$ in $L^1((0,T),L_{\text{loc}}^1(\Omega)):$    
    \be
            p^{\nu}_{\omega}\xrightarrow[L^1((0,T),L_{\text{loc}}^1(\Omega))]{\nu\to0}p_\omega. \label{pL1Conv}
        \ee    
    Assume furthermore for some $\epsilon>0$ arbitrarily small
        \begin{align}
            &\mathbf{u}^{\nu}_\omega \text{ uniformly bounded in } L^2((0,T),L^2(\Omega_{\epsilon}))\label{uBBound}\\
            &p^{\nu}_\omega \text{ uniformly bounded in } L^1((0,T), L^1(\Omega_{\epsilon})).\label{pBBound}
        \end{align}
    Then, the limit $(\bv_\omega,p_\omega)$ is a weak solution of \eqref{E-omega-mom2} and for this same sequence  
    and for all $\mathbf{Ext}\in {\mathcal E},$
   \begin{align}
         \lim_{\nu\to 0} \mathbf{Ext}^*[\bv_\phi\bdot\nu\grad\btimes\bomega^\nu]
        = \bv_\phi\bdot \btau_w 
        \quad \mbox{ in $D'((\partial B)_T)$}
    \label{uphi-tau} \end{align}
    Furthermore, the local balance Eq.\eqref{Eint-loc} for the interaction energy
    holds distributionally in the sense that for all $\varphi \in \Bar{D}(\Bar{\Omega}\times (0,T))$, with $\psi = \varphi|_{\partial B}$, 
    \begin{eqnarray} 
        && -\int_0^T\int_{\Omega}\partial_t\varphi\,(\mathbf{u}_{\omega}\bdot\mathbf{u}_{\phi}) \, dV\,dt\cr 
        && -\int_0^T\int_{\Omega}\grad\varphi\bdot\left[(\mathbf{u}_{\omega}\bdot\mathbf{u}_{\phi})\mathbf{u}+\frac{1}{2}|\mathbf{u}_{\phi}|^2\mathbf{u}_{\omega}+p_{\omega}\mathbf{u}_{\phi}+p_{\phi}\mathbf{u}_{\omega}\right] dV\,dt\cr 
        && =-\langle\bv_\phi\bdot\btau_w,\psi\rangle
        +\int_0^T\int_{\Omega}\varphi\,(\grad\mathbf{u}_{\phi}\bdots\mathbf{u}_{\omega}\otimes\mathbf{u}_{\omega})\,dV\,dt
        \label{weakInteractionLimit1}
    \end{eqnarray}
    Finally, if \eqref{L2Conv} is strengthened to global $L^2$ convergence
    \be
            \mathbf{u}^{\nu}_{\omega}\xrightarrow[L^2((0,T),L^2(\Omega))]{\nu\to0}\mathbf{u}_\omega, \label{L2Conv-str}
    \ee    
    then the Josephson-Anderson relation \eqref{dJA-bd} holds also in the inviscid limit, in the sense that
    $D=\lim_{\nu\to 0} {\bf F}^\nu\bdot\bV$ exists in ${\mathcal D}'((0,T))$ with 
    \be
\langle D,\chi\rangle =
-\rho \int_\Omega \chi \grad\bv_\phi\bdots\bv_\omega\otimes\bv_\omega\, dV\,dt + 
\rho\langle \bv_\phi\bdot\btau_w,\chi\otimes 1\rangle 
\lb{dJA-zero} \ee 
for all $\chi\in {\mathcal D}((0,T)).$
\end{theorem}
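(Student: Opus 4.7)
The plan is to use the finite-viscosity Josephson--Anderson identity \eqref{dJA-bd}, which holds as a pointwise-in-time identity for each $\nu>0$ under the strong-solution hypothesis by \cite{eyink2021josephson}, pair both sides in time against an arbitrary $\chi\in\mathcal{D}((0,T))$, and pass $\nu\to 0$ in each term on the right-hand side. Two preliminary facts will be used. First, $\bv_\phi=\grad\phi$ solves the exterior Neumann problem $\Delta\phi=0$ in $\Omega$, $\partial_n\phi=0$ on $\partial B$, $\grad\phi\to\bV$ at infinity, so that elliptic regularity together with standard decay estimates give $\bv_\phi\in C^\infty(\bar\Omega)$ with $\grad\bv_\phi\in L^\infty(\bar\Omega)$ and $\grad\bv_\phi(\bx)=O(|\bx|^{-3})$ as $|\bx|\to\infty$; in particular $\grad\bv_\phi\in L^\infty\cap L^2$. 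Second, since $\bv_\phi\bdot\bn=0$ on the compact surface $\partial B$, the restriction $\bv_\phi|_{\partial B}$ is a smooth section of $\mathcal{T}^*(\partial B)$, so that $\chi\otimes\bv_\phi|_{\partial B}$ is an admissible test section in $D((\partial B)_T,\mathcal{T}^*(\partial B)_T)$.

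For the advective integral, the strengthened convergence \eqref{L2Conv-str} supplies $\bv_\omega^\nu\to\bv_\omega$ strongly in $L^2((0,T),L^2(\Omega))$ and a uniform bound on the corresponding $L^2(L^2)$-norms. Writing $\bv_\omega^\nu\otimes\bv_\omega^\nu-\bv_\omega\otimes\bv_\omega=(\bv_\omega^\nu-\bv_\omega)\otimes\bv_\omega^\nu+\bv_\omega\otimes(\bv_\omega^\nu-\bv_\omega)$ and applying Cauchy--Schwarz gives $\bv_\omega^\nu\otimes\bv_\omega^\nu\to\bv_\omega\otimes\bv_\omega$ in $L^1((0,T),L^1(\Omega))$; pairing against the bounded function $\chi(t)\grad\bv_\phi(\bx)$ then yields the desired convergence. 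For the viscous wall term I recognize $\int_0^T\chi\int_{\partial B}\bv_\phi\bdot\btau_w^\nu\,dA\,dt=\langle\btau_w^\nu,\chi\otimes\bv_\phi|_{\partial B}\rangle$ and invoke the distributional convergence \eqref{tau-lim} from Theorem 1 of \cite{quan2022inertial} to pass to $\langle\btau_w,\chi\otimes\bv_\phi|_{\partial B}\rangle=\langle\bv_\phi\bdot\btau_w,\chi\otimes 1\rangle$, by the scalar pairing defined immediately before Theorem~\ref{theorem1}. Summing the two limits both establishes that $\bF^\nu\bdot\bV$ has a distributional limit $D\in\mathcal{D}'((0,T))$ and gives the explicit formula \eqref{dJA-zero}.

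The main technical obstacle is the global-in-space control of the advective integral on the unbounded exterior domain $\Omega$: the weaker assumption \eqref{L2Conv} is insufficient because $\grad\bv_\phi$, although decaying, carries long-range mass, so $\grad\bv_\phi\bdots\bv_\omega^\nu\otimes\bv_\omega^\nu$ cannot be spatially truncated without a remainder that must be shown to be uniformly small in $\nu$. Hypothesis \eqref{L2Conv-str} circumvents the difficulty directly. A more delicate alternative would be to test the local interaction-energy identity \eqref{weakInteractionLimit1} against $\chi(t)\zeta_R(\bx)$ with a smooth spatial cutoff $\zeta_R$ and let $R\to\infty$, but justifying the vanishing of the far-field flux contributions would require additional uniform-in-$\nu$ decay estimates on $(\bv_\omega^\nu,p_\omega^\nu)$ at spatial infinity.
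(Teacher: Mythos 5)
Your treatment of the final conclusion \eqref{dJA-zero} is correct and coincides with the paper's own argument for that part: smear \eqref{dJA-bd} against $\chi$, use $\grad\bv_\phi\in L^\infty((0,T)\times\Omega)$ (from Proposition \ref{prop:pot} together with the $O(|\bx|^{-3})$ decay of the gradient of the harmonic potential) and the global convergence \eqref{L2Conv-str} to pass to the limit in the advective integral, and apply \eqref{tau-lim} to the test section $\chi\otimes\bv_\phi|_{\partial B}\in D((\partial B)_T,\mathcal{T}^*(\partial B)_T)$ for the wall term.

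However, there is a genuine gap: the proposal establishes only one of the theorem's conclusions. The claims \eqref{uphi-tau} and \eqref{weakInteractionLimit1} --- the distributional boundary limit of the viscous transfer term $\bv_\phi\bdot\nu\grad\btimes\bomega^\nu$ and the weak interaction-energy balance --- are not consequences of \eqref{dJA-bd} and constitute the bulk of the paper's proof. There one tests \eqref{Eint-loc} against an arbitrary $\varphi\in\Bar{D}(\Bar{\Omega}\times(0,T))$ (crucially, a test function that need not vanish on $\partial B$) and applies Green's identity together with the no-slip condition for $\bv^\nu$ to split the viscous term as
\[
-\int_0^T\!\!\int_{\Omega}\varphi\,\bu_{\phi}\bdot \nu\triangle\bv^{\nu}\,dV\, dt
=-\int_0^T\!\!\int_{\partial B}\psi\,\bu_{\phi}\bdot\btau_w^\nu\,dS\,dt
- \nu\int_0^T\!\!\int_{\Omega} \triangle(\varphi\bu_{\phi})\bdot\bv^\nu\,dV\,dt .
\]
The bulk remainder is $O(\nu)$ by Cauchy--Schwarz and the local $L^2$ bounds on $\bv^\nu$, while the boundary integral converges to $\langle\bv_\phi\bdot\btau_w,\psi\rangle$ by \eqref{tau-lim} applied to $\psi\,\bu_\phi|_{\partial B}$; this is precisely \eqref{uphi-tau}. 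The nonlinear and pressure terms of the tested identity then pass to the limit using only the local convergences \eqref{L2Conv}--\eqref{pL1Conv} and the near-wall bounds \eqref{uBBound}--\eqref{pBBound}, which yields \eqref{weakInteractionLimit1}; the weak-solution property of $(\bv_\omega,p_\omega)$ is inherited from Theorem 1 of \cite{quan2022inertial}. Your proposal never performs this integration by parts, and without it neither \eqref{uphi-tau} nor \eqref{weakInteractionLimit1} is established; note also that your closing paragraph invokes \eqref{weakInteractionLimit1} as if it were available for an ``alternative'' derivation of \eqref{dJA-zero}, when it is itself one of the statements to be proved.
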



\begin{remark}
Although we have so far presented our results by assuming a fluid velocity at infinity $\bV$ which is 
time-independent, the proof presented below should allow any function $\bV(t)$ which is $C^\infty$ in time, 
corresponding to smooth translational motion of the body $B.$ In that case the potential Euler solution $\bv_\phi$
is of course also time-dependent. See footnote [91] in \cite{eyink2021josephson} and section \ref{sec:pot} below for more
discussion. It would be interesting to generalize these results further to allow for solid-body dynamics of $B$ 
including possible rotation, as in \cite{sueur2012kato}. 
\end{remark} 

\begin{remark} Just as in \cite{quan2022inertial}, Theorem 1, condition \eqref{pL1Conv} on convergence of pressure 
can be replaced with any assumption guaranteeing that along a suitable subsequence of $\nu$, $p^{\nu}_\omega\to p_\omega
    \in  L^1((0,T), L^1_{loc}(\Omega))$ in the sense of distributions. E.g. \\
    
    \vspace{-20pt} 
        \begin{align}
            &p^{\nu}_\omega \text{ uniformly bounded in }L^{q}((0,T),L_{\text{loc}}^{q}(\Omega))
            \mbox{ for some $q>1$} 
            \label{pBulkBound}
        \end{align}
would suffice. See \cite{quan2022inertial}, Remark 2.         
\end{remark}

Our next result is analogous to Theorem 2 in \cite{quan2022inertial}, which involved
spatial momentum cascade, but considering now the spatial flux of interaction energy. 
Similar to the definition \eqref{momflux} of momentum flux, we apply spatial coarse-graining
and windowing to Eq.\eqref{NS-omega-mom2} for $\bv_\omega^\nu$ and to the Euler equation for $\bv_\phi$, 
yielding the balance for interaction energy in length scales $>\ell$ and wall distances $>h:$
\begin{align}
    &\partial_t(\eta_{h,\ell}\Bar{\mathbf{u}}_{\omega,\ell}^{\nu}\bdot\Bar{\mathbf{u}}_{\phi,\ell}) + \grad\bdot(\eta_{h,\ell}\Bar{\mathbf{J}}_{\phi,\ell}^{\nu})\label{cgInteractionWindow1}
    =\grad\eta_{h,\ell}\bdot\Bar{\mathbf{J}}_{\phi,\ell}^{\nu} + \eta_{h,\ell}\Bar{\mathbf{u}}_{\omega,\ell}^{\nu}\Bar{\mathbf{u}}_{\omega,\ell}^{\nu}\bdots\grad\Bar{\mathbf{u}}_{\phi,\ell}\\ \nonumber
    &+\eta_{h,\ell}\grad\Bar{\mathbf{u}}_{\phi,\ell}\bdots(\tau_\ell(\bv_\omega^\nu,\bv_\omega^\nu) 
    +\tau_\ell(\bv_\phi,\bv_\omega^\nu)+\tau_\ell(\bv_\omega^\nu,\bv_\phi))
    -\eta_{h,\ell}\Bar{\mathbf{u}}_{\omega,\ell}^{\nu}\bdot(\grad\bdot\tau_\ell(\bv_\phi,\bv_\phi)). 
\end{align}
Here $\tau_\ell(f,g):=\overline{(fg)}_\ell-\bar{f}_\ell\bar{g}_\ell$ for any functions $f,$ $g$ on 
$\Omega^\ell:=\{\bx\in \Omega: d(\bx)>\ell\}$ and spatial flux of interaction energy is given by 
\begin{eqnarray} 
    \Bar{\mathbf{J}}_{\phi,\ell}^{\nu} = (\Bar{\mathbf{u}}_{\omega,\ell}^{\nu}\bdot\Bar{\mathbf{u}}_{\phi,\ell})\Bar{\mathbf{u}}^{\nu}_\ell
    && \! +\ \Bar{p}_{\omega,\ell}^{\nu}\Bar{\mathbf{u}}_{\phi,\ell}
    +\Big(\frac{1}{2}|\Bar{\mathbf{u}}_{\phi,\ell}|^2+\Bar{p}_{\phi,\ell}\Big)\Bar{\mathbf{u}}_{\omega,\ell}^{\nu}
    -\nu\Bar{\mathbf{u}}_{\phi,\ell}\btimes\Bar{\bomega}^{\nu}_\ell \cr 
    && \! +\ (\tau_\ell(\bv_\omega^\nu,\bv_\omega^\nu) 
    +\tau_\ell(\bv_\phi,\bv_\omega^\nu)+\tau_\ell(\bv_\omega^\nu,\bv_\phi))\bdot\Bar{\mathbf{u}}_{\phi,\ell} 
\label{Jphi-def} \end{eqnarray} 
The balance equation \eqref{cgInteractionWindow1} holds also for the limiting fields $(\bv_\omega,p_\omega)$ 
of Theorem \ref{theorem1}, as stated by the following Proposition, whose easy proof is left to 
the reader:
\begin{proposition}
Under the assumptions \eqref{L2Conv}-\eqref{pL1Conv} of Theorem \ref{theorem1}, the following equation 
holds pointwise for $\bx\in\Omega$ and distributionally for $t\in (0,T)$:
\begin{eqnarray} 
    &&\partial_t(\eta_{h,\ell}\Bar{\mathbf{u}}_{\omega,\ell}\bdot\Bar{\mathbf{u}}_{\phi,\ell}) 
    + \grad\bdot(\eta_{h,\ell}\Bar{\mathbf{J}}_{\phi,\ell})
    =\grad\eta_{h,\ell}\bdot\Bar{\mathbf{J}}_{\phi,\ell} + \eta_{h,\ell}\Bar{\mathbf{u}}_{\omega,\ell}\Bar{\mathbf{u}}_{\omega,\ell}\bdots\grad\Bar{\mathbf{u}}_{\phi,\ell}
    \cr
    &&+\eta_{h,\ell}\grad\Bar{\mathbf{u}}_{\phi,\ell}\bdots(\tau_\ell(\bv_\omega,\bv_\omega) 
    +\tau_\ell(\bv_\phi,\bv_\omega)+\tau_\ell(\bv_\omega,\bv_\phi))
    -\eta_{h,\ell}\Bar{\mathbf{u}}_{\omega,\ell}\bdot(\grad\bdot\tau_\ell(\bv_\phi,\bv_\phi)). \cr 
    &&  \label{cgWindowEulerInteraction}
\end{eqnarray} 
where $\Bar{\mathbf{J}}_{\phi,\ell}$ is given by the formula \eqref{Jphi-def} with $\nu\mapsto 0,$  
$(\bv_\omega^\nu,p_\omega^\nu)\mapsto (\bv_\omega,p_\omega).$ 
    \end{proposition}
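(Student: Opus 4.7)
The plan is to derive the identity directly from the mollified momentum equations for $\bv_\omega$ and $\bv_\phi$. No further limit passage in $\nu$ is needed, since Theorem~\ref{theorem1} already delivers that $(\bv_\omega,p_\omega)$ is a weak Euler solution of \eqref{E-omega-mom2}, and $\bv_\phi\in C^\infty(\bar\Omega\times[0,T])$ is a classical solution of the incompressible Euler equation.

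First I would convolve \eqref{E-omega-mom2} with the standard mollifier $G_\ell$. Since convolution commutes with all derivatives, the result holds pointwise in $\bx\in\Omega^\ell:=\{\bx\in\Omega:d(\bx)>\ell\}$ and distributionally in $t\in(0,T)$:
\[
\partial_t\bar{\bv}_{\omega,\ell}+\grad\bdot\big(\bar{\bv}_{\omega,\ell}\otimes\bar{\bv}_{\omega,\ell}+\bar{\bv}_{\omega,\ell}\otimes\bar{\bv}_{\phi,\ell}+\bar{\bv}_{\phi,\ell}\otimes\bar{\bv}_{\omega,\ell}+\tau_\ell(\bv_\omega,\bv_\omega)+\tau_\ell(\bv_\phi,\bv_\omega)+\tau_\ell(\bv_\omega,\bv_\phi)\big)+\grad\bar p_{\omega,\ell}=0,
\]
together with $\grad\bdot\bar{\bv}_{\omega,\ell}=0$. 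Because $\bv_\phi$ satisfies the classical Euler equation pointwise, the same mollification yields $\partial_t\bar{\bv}_{\phi,\ell}+\grad\bdot(\bar{\bv}_{\phi,\ell}\otimes\bar{\bv}_{\phi,\ell}+\tau_\ell(\bv_\phi,\bv_\phi))+\grad\bar p_{\phi,\ell}=0$ with $\grad\bdot\bar{\bv}_{\phi,\ell}=0$, pointwise on $\Omega^\ell$.

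Next I would form $\partial_t(\bar{\bv}_{\omega,\ell}\bdot\bar{\bv}_{\phi,\ell})$ via Leibniz and substitute both mollified equations. Exploiting incompressibility together with vector identities of the type $\bar{\bv}_{\phi,\ell}\bdot\grad\bdot(\bar{\bv}_{\omega,\ell}\otimes\bar{\bv}_{\omega,\ell})=\grad\bdot[(\bar{\bv}_{\omega,\ell}\bdot\bar{\bv}_{\phi,\ell})\bar{\bv}_{\omega,\ell}]-\bar{\bv}_{\omega,\ell}\otimes\bar{\bv}_{\omega,\ell}\bdots\grad\bar{\bv}_{\phi,\ell}$ and its analogues for the remaining advective and pressure contributions (in particular, one uses $\bar{\bv}_{\omega,\ell}\bdot\grad\bar p_{\phi,\ell}=\grad\bdot(\bar p_{\phi,\ell}\bar{\bv}_{\omega,\ell})$ and similarly for $\bar p_{\omega,\ell}$), every term except three assembles into a single divergence $\grad\bdot\bar{\mathbf{J}}_{\phi,\ell}$ reproducing \eqref{Jphi-def} with $\nu$ set to zero. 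The residual pointwise right-hand side is then exactly $\bar{\bv}_{\omega,\ell}\otimes\bar{\bv}_{\omega,\ell}\bdots\grad\bar{\bv}_{\phi,\ell}+\grad\bar{\bv}_{\phi,\ell}\bdots(\tau_\ell(\bv_\omega,\bv_\omega)+\tau_\ell(\bv_\phi,\bv_\omega)+\tau_\ell(\bv_\omega,\bv_\phi))-\bar{\bv}_{\omega,\ell}\bdot(\grad\bdot\tau_\ell(\bv_\phi,\bv_\phi))$.

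Finally I would multiply through by the window $\eta_{h,\ell}(\bx)$. The hypothesis $h>\ell$ forces $\mathrm{supp}(\eta_{h,\ell})\subset\Omega^h\subset\Omega^\ell$, so the pointwise identity above is valid everywhere $\eta_{h,\ell}\neq 0$; outside that set, $\eta_{h,\ell}$ and $\grad\eta_{h,\ell}$ vanish and the identity holds trivially. Since $\eta_{h,\ell}$ is time-independent, $\eta_{h,\ell}\partial_t(\cdot)=\partial_t(\eta_{h,\ell}\,\cdot)$, and Leibniz gives $\eta_{h,\ell}\grad\bdot\bar{\mathbf{J}}_{\phi,\ell}=\grad\bdot(\eta_{h,\ell}\bar{\mathbf{J}}_{\phi,\ell})-\grad\eta_{h,\ell}\bdot\bar{\mathbf{J}}_{\phi,\ell}$; rearrangement produces precisely \eqref{cgWindowEulerInteraction}. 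There is no analytical obstacle here — the limit $\nu\to 0$ has already been absorbed into Theorem~\ref{theorem1}, and $\bar{\bv}_{\omega,\ell}$, $\bar{\bv}_{\phi,\ell}$ are $C^\infty$ in $\bx$ on $\Omega^\ell$ — so the only care required is algebraic bookkeeping in separating genuine flux-divergence contributions from the residual pointwise contractions against $\grad\bar{\bv}_{\phi,\ell}$.
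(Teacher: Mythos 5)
Your proposal is correct and follows exactly the route the paper intends: the paper leaves this ``easy proof'' to the reader, having already derived the viscous analogue \eqref{cgInteractionWindow1} by mollifying and windowing the momentum equations, and your argument is the inviscid counterpart of that same computation, applied to the weak solution $(\bv_\omega,p_\omega)$ of \eqref{E-omega-mom2} and the smooth potential flow. Your handling of the two key points of care --- that the mollified fields are only defined on $\Omega^\ell$ but the window $\eta_{h,\ell}$ with $h>\ell$ is supported there, and that the residual contractions against $\grad\bar{\bv}_{\phi,\ell}$ assemble (after the cancellation between the $\bar{\bv}_{\phi,\ell}\otimes\bar{\bv}_{\omega,\ell}$ and $\bar{\bv}_{\phi,\ell}\otimes\bar{\bv}_{\phi,\ell}$ contributions) into precisely the three stated terms --- is right.
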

\noindent Our second main result is then:
\begin{theorem}\label{theorem2}
    Assume conditions (\ref{L2Conv})-(\ref{pBBound}) in Theorem \ref{theorem1}. Then 
\begin{align}
    - \lim_{h,\ell\to0} \mathbf{Ext}^*[\grad\eta_{h,\ell}\bdot\Bar{\mathbf{J}}_{\phi,\ell}] = \bv_\phi\bdot \btau_w 
    \quad \mbox{ in $D'((\partial B)_T)$} \label{cascade_friction}
\end{align}
for all $\mathbf{Ext}\in {\mathcal E}.$ 
\end{theorem}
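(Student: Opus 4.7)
The plan is to deduce Theorem \ref{theorem2} by comparing the coarse-grained local balance \eqref{cgWindowEulerInteraction} of the Proposition with the weak limiting balance \eqref{weakInteractionLimit1} furnished by Theorem \ref{theorem1}, both tested against the same scalar extension $\varphi=\mathbf{Ext}(\psi)\in\Bar{D}(\Bar{\Omega}\times(0,T))$. Once the subgrid (SGS) contributions are shown to vanish, the resolved inertial transfer $\Bar{\mathbf{u}}_{\omega,\ell}\otimes\Bar{\mathbf{u}}_{\omega,\ell}\bdots\grad\Bar{\mathbf{u}}_{\phi,\ell}$ will appear on both sides of the resulting identity and cancel exactly, producing the claimed identification of the spatial cascade flux with $\bv_\phi\bdot\btau_w$.

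Concretely, I would first multiply \eqref{cgWindowEulerInteraction} by $\varphi$ and integrate over $\Omega\times(0,T)$, moving the time and space derivatives onto $\varphi$ by integration by parts. The boundary contributions at $\partial B$ vanish since $\eta_{h,\ell}|_{\partial B}=0$ for $h>0$, and those at infinity vanish because $\varphi$ has compact support, yielding
\begin{align*}
&-\int_0^T\!\int_\Omega \partial_t\varphi\,(\eta_{h,\ell}\Bar{\mathbf{u}}_{\omega,\ell}\bdot\Bar{\mathbf{u}}_{\phi,\ell})\,dV\,dt
-\int_0^T\!\int_\Omega \grad\varphi\bdot(\eta_{h,\ell}\Bar{\mathbf{J}}_{\phi,\ell})\,dV\,dt
\\ &\qquad=\int_0^T\!\int_\Omega \varphi\,\grad\eta_{h,\ell}\bdot\Bar{\mathbf{J}}_{\phi,\ell}\,dV\,dt
+\int_0^T\!\int_\Omega \varphi\,\eta_{h,\ell}\,\Bar{\mathbf{u}}_{\omega,\ell}\otimes\Bar{\mathbf{u}}_{\omega,\ell}\bdots\grad\Bar{\mathbf{u}}_{\phi,\ell}\,dV\,dt
+\mathrm{SGS}_{h,\ell},
\end{align*}
where $\mathrm{SGS}_{h,\ell}$ collects the three SGS terms of \eqref{cgWindowEulerInteraction}. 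Next I would pass $h,\ell\to 0$. Hypotheses \eqref{L2Conv}--\eqref{pBBound} together with $\bv_\phi\in C^\infty(\Bar{\Omega})$ imply that the LHS converges to the LHS of \eqref{weakInteractionLimit1}, which by Theorem \ref{theorem1} equals $-\langle\bv_\phi\bdot\btau_w,\psi\rangle+\int_0^T\!\int_\Omega\varphi\,\grad\bv_\phi\bdots\bv_\omega\otimes\bv_\omega\,dV\,dt$. The resolved transfer on the RHS converges to the identical integral (by strong $L^2_{\mathrm{loc}}$ convergence $\Bar{\mathbf{u}}_{\omega,\ell}\to\bv_\omega$ and uniform convergence $\grad\Bar{\mathbf{u}}_{\phi,\ell}\to\grad\bv_\phi$), so the two copies cancel exactly, leaving only the flux integral and $\mathrm{SGS}_{h,\ell}$ on the right.

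The main obstacle is verifying $\mathrm{SGS}_{h,\ell}\to 0$, which is the step that makes essential use of the smoothness of the background potential flow. Taylor-expanding the mollifier around the smooth field $\bv_\phi$ gives $\tau_\ell(\bv_\phi,\bv_\phi),\grad\bdot\tau_\ell(\bv_\phi,\bv_\phi)=O(\ell^2)$ uniformly on compact subsets of $\Bar{\Omega}$, so $\int\varphi\,\eta_{h,\ell}\Bar{\mathbf{u}}_{\omega,\ell}\bdot\grad\bdot\tau_\ell(\bv_\phi,\bv_\phi)\,dV\,dt\to 0$ thanks to the uniform $L^2$ bound \eqref{uBBound}. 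The Cauchy--Schwarz inequality $|\tau_\ell(f,g)|\le\tau_\ell(f,f)^{1/2}\tau_\ell(g,g)^{1/2}$ combined with $\tau_\ell(\bv_\phi,\bv_\phi)^{1/2}=O(\ell)$ yields $\tau_\ell(\bv_\omega,\bv_\phi),\tau_\ell(\bv_\phi,\bv_\omega)=O(\ell)$ in $L^2_{\mathrm{loc}}$, which vanish when paired against the bounded $\eta_{h,\ell}\varphi\grad\Bar{\mathbf{u}}_{\phi,\ell}\in L^\infty$. Finally, $\tau_\ell(\bv_\omega,\bv_\omega)\to 0$ in $L^1_{\mathrm{loc}}$ follows from $\Bar{\mathbf{u}}_{\omega,\ell}\to\bv_\omega$ strongly in $L^2_{\mathrm{loc}}$, a standard consequence of \eqref{L2Conv}. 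Combining these three controls gives $\mathrm{SGS}_{h,\ell}\to 0$, whence $-\langle\bv_\phi\bdot\btau_w,\psi\rangle=\lim_{h,\ell\to 0}\int_0^T\!\int_\Omega\varphi\,\grad\eta_{h,\ell}\bdot\Bar{\mathbf{J}}_{\phi,\ell}\,dV\,dt$ for every $\mathbf{Ext}\in\mathcal{E}$ and $\psi\in D((\partial B)_T)$, which is \eqref{cascade_friction}. The structural reason behind this vanishing is that every subgrid stress in \eqref{cgWindowEulerInteraction} carries at least one smooth $\bv_\phi$ factor or is contracted against the smooth $\grad\bv_\phi$; without this smoothness an Onsager-type cascade anomaly for the interaction energy could persist and obstruct the identification.
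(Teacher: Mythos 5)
Your proposal is correct and takes essentially the same route as the paper: test the coarse-grained balance \eqref{cgWindowEulerInteraction} against $\varphi=\mathbf{Ext}(\psi)$, show the cumulant/subgrid contributions vanish as $h,\ell\to 0$ using smoothness of $\bv_\phi$ and strong convergence of the windowed, mollified $\bv_\omega$ (with \eqref{uBBound}--\eqref{pBBound} controlling the region up to the wall), and then identify the limiting wall flux by comparison with \eqref{weakInteractionLimit1} of Theorem \ref{theorem1}. The only cosmetic difference is that you bound $\grad\bdot\tau_\ell(\bv_\phi,\bv_\phi)$ and the mixed cumulants via pointwise Taylor/Cauchy--Schwarz estimates, whereas the paper uses its Lemma \ref{lemma1} together with the commutator estimate of \cite{drivas2018onsager}; both suffice.
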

\noindent This theorem states that the spatial cascade of interaction energy to the wall equals the 
skin friction term in the inviscid Josephson-Anderson relation \eqref{dJA-zero}.  

Finally, Theorem 3 of \cite{quan2022inertial} showed that momentum cascade to the wall must 
vanish when the limiting Euler solution satisfies the no-flow-through condition at the wall 
in a suitable sense. An exact analogue of this result for interaction energy is given by the following: 

\begin{theorem}\label{theorem3} 
Let $(\mathbf{u}_\omega, p_\omega)$ be the weak solution of Eq.\eqref{E-omega-mom2} from Theorem \ref{theorem1}, with 
$\mathbf{u}_\omega\in L^2(0,T,L^2_{\text{loc}}(\Omega))\cap L^2(0,T,L^2(\Omega_{\epsilon})),$  
$p_\omega\in L^1(0,T,L_{\text{loc}}^1(\Omega))\cap L^1(0,T,L^1(\Omega_{\epsilon})$ for some $\epsilon>0.$ 
Assume further the stronger near-wall boundedness property 
            \begin{align}
                \mathbf{u}_\omega \in L^2((0,T),L^{\infty}(\Omega_{\epsilon})), \;\;\;\; p_{\omega}\in L^1((0,T),L^{\infty}(\Omega_{\epsilon}))\label{wallboundedness}
            \end{align}
and the no-flow-through condition at the boundary in the sense
      \begin{align}
            \lim_{\delta\to0}\norm{\mathbf{n}\bdot\mathbf{u}_\omega}_{L^2((0,T),L^{\infty}(\Omega_{\delta}))} = 0. \label{wallnormal}
        \end{align}
     Then, for all $\mathbf{Ext}\in {\mathcal E}$ 
     \begin{align} 
        \hspace{50pt} \lim_{h,\ell\to0} \mathbf{Ext}^*[\grad\eta_{h,\ell}\bdot\Bar{\mathbf{J}}_{\phi,\ell}] = 0 
        \quad \mbox{ in $D'((\partial B)_T)$}
    \end{align}
so that $\bv_\phi\bdot \btau_w=0.$ Thus, an inviscid form of the balance Eq.\eqref{Eint-loc} for the interaction energy holds distributionally, in the sense that for all $\varphi \in \Bar{D}(\Bar{\Omega}\times (0,T))$
    \begin{align}
        & -\int_0^T\int_{\Omega}\partial_t\varphi\,(\mathbf{u}_{\omega}\bdot\mathbf{u}_{\phi}) \, dV\,dt\label{weakInteractionLimit2} 
        \\ \nonumber
        & -\int_0^T\int_{\Omega}\grad\varphi\bdot\left[(\mathbf{u}_{\omega}\bdot\mathbf{u}_{\phi})\mathbf{u}+\frac{1}{2}|\mathbf{u}_{\phi}|^2\mathbf{u}_{\omega}+p_{\omega}\mathbf{u}_{\phi}+p_{\phi}\mathbf{u}_{\omega}\right] dV\,dt\\ \nonumber 
        &\hspace{80pt} 
        =\int_0^T\int_{\Omega}\varphi\,(\grad\mathbf{u}_{\phi}\bdots\mathbf{u}_{\omega}\otimes\mathbf{u}_{\omega})\,dV\,dt.
    \end{align}
    If convergence to the inviscid limit holds in the global $L^2$-sense \eqref{L2Conv-str}, then the Josephson-Anderson relation \eqref{dJA-bd} is valid in the inviscid form 
    \be
D =
-\rho \int_\Omega \grad\bv_\phi\bdots\bv_\omega\otimes\bv_\omega\, dV.
\lb{dJA-zero2} \ee 
\end{theorem}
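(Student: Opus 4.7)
The plan is to reduce the interaction-energy cascade to the momentum cascade already handled in Theorem~3 of \cite{quan2022inertial}, exploiting $\bv_\phi\bdot\bn=0$ on $\partial B$ together with the no-flow-through hypothesis \eqref{wallnormal}. Substituting $\bar{\mathbf{u}}_\ell=\bar{\mathbf{u}}_{\omega,\ell}+\bar{\mathbf{u}}_{\phi,\ell}$ into \eqref{Jphi-def} at $\nu=0$ and using $(\mathbf{a}\otimes\mathbf{b})\bdot\mathbf{c}=(\mathbf{b}\bdot\mathbf{c})\mathbf{a}$ yields the algebraic identity
\begin{equation*}
\bar{\mathbf{J}}_{\phi,\ell} \;=\; \bar{\mathbf{T}}_{\omega,\ell}\bdot\bar{\mathbf{u}}_{\phi,\ell} \;+\; \bar{p}_{\omega,\ell}\bar{\mathbf{u}}_{\phi,\ell} \;+\; \bigl(\bar{p}_{\phi,\ell}-\tfrac{1}{2}|\bar{\mathbf{u}}_{\phi,\ell}|^2\bigr)\bar{\mathbf{u}}_{\omega,\ell}.
\end{equation*}
Since $\grad\eta_{h,\ell}=\theta'(d)\bn$, the pairing of $\grad\eta_{h,\ell}\bdot\bar{\mathbf{J}}_{\phi,\ell}$ against a scalar extension $\phi=\mathbf{Ext}(\psi)$ splits as $A+B$, where
\begin{equation*}
A\;=\;\int_0^T\!\!\int_\Omega \phi\,\bar{\mathbf{u}}_{\phi,\ell}\bdot\bigl(\grad\eta_{h,\ell}\bdot\bar{\mathbf{T}}_{\omega,\ell}+\bar{p}_{\omega,\ell}\grad\eta_{h,\ell}\bigr)\,dV\,dt,
\end{equation*}
and $B$ is the analogous integral of $\phi(\bar{p}_{\phi,\ell}-\tfrac{1}{2}|\bar{\mathbf{u}}_{\phi,\ell}|^2)(\bn\bdot\bar{\mathbf{u}}_{\omega,\ell})\theta'(d)$.

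For $B$, the $C^\infty$-smoothness of $(\bv_\phi,p_\phi)$ gives a uniform $L^\infty$ bound on the first factor, while a commutator estimate using $\bn\in C^\infty(\bar{\Omega}_\epsilon)$ together with \eqref{wallboundedness}--\eqref{wallnormal} forces $\|\bn\bdot\bar{\mathbf{u}}_{\omega,\ell}\|_{L^2((0,T),L^\infty(\Omega_{h+\ell}))}\to 0$ as $h,\ell\to 0$; since the co-area formula gives $\int|\grad\eta_{h,\ell}|\,dV=O(1)$ uniformly, one concludes $B\to 0$. For $A$, the smoothness of $\bv_\phi$ on $\bar{\Omega}$ yields $\|\bar{\mathbf{u}}_{\phi,\ell}-\bv_\phi\|_{L^\infty(\Omega^\ell)}=O(\ell)$, and the resulting replacement error is bounded by $O(\ell)\int_0^T\!\int|\grad\eta_{h,\ell}|(|\bar{\mathbf{T}}_{\omega,\ell}|+|\bar{p}_{\omega,\ell}|)\,dV\,dt$, uniformly finite under \eqref{wallboundedness}; so one may replace $\bar{\mathbf{u}}_{\phi,\ell}$ by $\bv_\phi$ up to $o(1)$. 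The remaining integral pairs the combined stress $\grad\eta_{h,\ell}\bdot\bar{\mathbf{T}}_{\omega,\ell}+\bar{p}_{\omega,\ell}\grad\eta_{h,\ell}$ with the $\ell$-independent vector field $\phi\bv_\phi$, whose boundary trace $\psi\bv_\phi|_{\partial B}$ is tangential since $\bv_\phi\bdot\bn=0$ on $\partial B$. Our hypotheses coincide with those of Theorem~3 of \cite{quan2022inertial} applied to $\bv_\omega$, so $\btau_w=\bzed$, and Theorem~2 of the same paper---whose limit depends only on the boundary trace of the extension---sends $A$ to $-\langle\btau_w,\psi\bv_\phi|_{\partial B}\rangle=0$.

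Combining the two estimates, $\mathbf{Ext}^*[\grad\eta_{h,\ell}\bdot\bar{\mathbf{J}}_{\phi,\ell}]\to 0$ in $D'((\partial B)_T)$. Theorem~\ref{theorem2} identifies this limit with $-\bv_\phi\bdot\btau_w$, so $\bv_\phi\bdot\btau_w=0$; substituting into \eqref{weakInteractionLimit1} of Theorem~\ref{theorem1} yields \eqref{weakInteractionLimit2}, and under the global-$L^2$ strengthening \eqref{L2Conv-str}, substituting into \eqref{dJA-zero} yields \eqref{dJA-zero2}. The main obstacle, I expect, is the step in $A$: the extension $\phi\bv_\phi$ has $\bn$-component of size $O(h+\ell)$ rather than $O(\ell)$ on the support of $\grad\eta_{h,\ell}$, so it lies in $\mathcal{E}_{\mathcal{T}}$ but not in the narrower class $\tilde{\mathcal{E}}_{\mathcal{T}}$ of \cite{quan2022inertial}, and one cannot separate the advective and pressure contributions and apply Theorem~3 of that paper to each alone. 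Bundling them together and invoking the full Theorem~2 of \cite{quan2022inertial} (valid on all of $\mathcal{E}_{\mathcal{T}}$), combined with the vanishing $\btau_w=\bzed$ supplied independently by Theorem~3 of the prior paper, is what closes the loop.
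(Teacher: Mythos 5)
Your proposal is correct in substance but takes a genuinely different route from the paper. The paper proves the result directly: it writes $\grad\eta_{h,\ell}\bdot\Bar{\mathbf{J}}_{\phi,\ell}=\theta_{h,\ell}'\,\bn\bdot[\cdots]$, observes that every term of $\Bar{\mathbf{J}}_{\phi,\ell}$ carries a factor of $\bn\bdot\Bar{\bv}_\ell$, $\bn\bdot\Bar{\bv}_{\omega,\ell}$, $\bn\bdot\Bar{\bv}_{\phi,\ell}$ or $\bn\bdot\tau_\ell(\cdot,\cdot)$, and kills each one with the same mollifier-splitting estimate $|\bn(\pi(\bx))\bdot\Bar{\bv}_\ell|\le\delta\|\bv(t)\|_{L^\infty(\Omega_\epsilon)}+\|\bn\bdot\bv(t)\|_{L^\infty(\Omega_{3h})}$ combined with $\|\theta_{h,\ell}'\|_\infty\le C/\ell$ and $|\Omega_{h+\ell}\backslash\Omega_h|\le C'\ell$; it never needs $\btau_w=\bzed$. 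You instead regroup $\Bar{\mathbf{J}}_{\phi,\ell}=\Bar{\mathbf{T}}_{\omega,\ell}\bdot\Bar{\bv}_{\phi,\ell}+\Bar{p}_{\omega,\ell}\Bar{\bv}_{\phi,\ell}+(\Bar{p}_{\phi,\ell}-\tfrac12|\Bar{\bv}_{\phi,\ell}|^2)\Bar{\bv}_{\omega,\ell}$ (the identity checks out, using symmetry of $\mathbf{T}_\omega$), handle only the scalar remainder $B$ by the direct estimate, and outsource the term $A$ to Theorems~2 and~3 of \cite{quan2022inertial}. Your approach buys a transparent conceptual reduction (interaction-energy flux $=$ momentum flux contracted with $\bv_\phi$, up to a remainder controlled by no-flow-through), and your closing observation about why the advective and pressure pieces cannot be separated --- $\phi\bv_\phi$ lies in $\mathcal{E}_{\mathcal{T}}$ but not in $\tilde{\mathcal{E}}_{\mathcal{T}}$, so one must bundle them and quote Theorem~2 rather than Theorem~3 of the prior paper for the flux itself --- is exactly right. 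The paper's approach is more self-contained and avoids the one step you leave implicit: to invoke Theorem~2 of \cite{quan2022inertial} you must realize the specific field $\phi\bv_\phi$ as $\textbf{Ext}(\psi\bv_\phi|_{\partial B})$ for some admissible $\textbf{Ext}\in\mathcal{E}_{\mathcal{T}}$, which requires a small construction (e.g.\ adding to $\textbf{Ext}^0$ a rank-one correction $\lambda(\bpsi)\,(\phi\bv_\phi-\textbf{Ext}^0(\psi\bv_\phi|_{\partial B}))$ with $\lambda$ a continuous linear functional normalized so that $\lambda(\psi\bv_\phi|_{\partial B})=1$; the correction vanishes on the boundary so the trace condition survives). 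Your quantitative bounds for the replacement $\Bar{\bv}_{\phi,\ell}\mapsto\bv_\phi$ and for $B$ are consistent with the hypotheses \eqref{wallboundedness}--\eqref{wallnormal}, and the final deductions of \eqref{weakInteractionLimit2} and \eqref{dJA-zero2} from Theorems~\ref{theorem1} and~\ref{theorem2} match the paper.
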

\noindent The implication is that cascade of interaction energy to the wall must vanish if the strong 
near-wall boundedness and no-flow-through conditions are satisfied.  

\subsection{Results on Energy Balance of the Rotational Flow} We next wish to show that  
results hold analogous to those of Duchon \& Robert on turbulent kinetic 
energy balance \cite{duchon2000inertial}, but now for the energy in the rotational wake. We have: 

\begin{theorem}\label{theorem4} 
    Let $(\mathbf{u}^{\nu}_\omega, p^{\nu}_\omega)$ be strong solutions of equations \eqref{NS-omega-mom2} 
    on $\Bar{\Omega}\times(0,T)$. Assume that
    $(\mathbf{u}^{\nu}_\omega)_{\nu>0}$ strongly converges to $\mathbf{u}_\omega$ in $L^3((0,T),L_{\text{loc}}^3(\Omega))$, 
        \begin{align}
            \mathbf{u}^{\nu}_\omega\xrightarrow[L^3((0,T),L_{\text{loc}}^3(\Omega))]{\nu\to0}\mathbf{u}_\omega, 
            \label{L3Conv}
            \end{align}
    and         
            \begin{align} 
            p_{\omega}^{\nu} \text{ uniformly bounded in } L^{\frac{3}{2}}((0,T),L^{\frac{3}{2}}_{loc}(\Omega)). 
            \lb{pL32Bd} 
        \end{align}
    Assume further for some $\epsilon>0$ 
        \begin{align}
            &\mathbf{u}^{\nu}_\omega \text{ uniformly bounded in } L^3((0,T),L^3(\Omega_{\epsilon}))\label{uL3Boundary}\\
            &p^{\nu}_\omega \text{ uniformly bounded in } L^\frac{3}{2}((0,T), L^{\frac{3}{2}}(\Omega_{\epsilon})).\label{pL32Boundary}
        \end{align}

    Then, along a suitable subsequence $p^{\nu}_\omega\to p_\omega\in L^{\frac{3}{2}}((0,T),L^{\frac{3}{2}}_{\text{loc}}(\Omega))$ distributionally,  
    so that $(\bv_\omega,p_\omega)$ is a weak solution of the inviscid equation \eqref{E-omega-mom2}
    satisfying the results \eqref{uphi-tau},\eqref{weakInteractionLimit1} of Theorem \ref{theorem1}.  
    Also, $Q^{\nu} = \nu|\bomega^{\nu}|^2$ converges for this subsequence to a positive linear functional $Q$ 
    on $\Bar{D}(\Bar{\Omega}\times (0,T))$, in the sense that $\forall\varphi\in \Bar{D}(\Bar{\Omega}\times (0,T))$,
        \begin{align}
            \lim_{\nu \to 0}  \int_0^T \int_{\Omega} \varphi\, Q^\nu \,  dV\, dt = \langle Q,\varphi\rangle
            \label{viscDissLimit2}
        \end{align}
    with $\langle Q,\varphi\rangle\geq 0$ for $\varphi\geq 0.$ Finally, 
   an inviscid version of the balance equation \eqref{Eom-loc} for rotational energy holds in the sense that
   for all $\varphi \in \Bar{D}(\Bar{\Omega}\times(0,T))$, $\psi = \varphi|_{\partial B}$,
    \begin{eqnarray} 
        && 
        -\int_0^T\int_{\Omega}\frac{1}{2}\partial_t\varphi|\mathbf{u}_{\omega}|^2\,dV\, dt
        -\int_0^T\int_{\Omega}\grad\varphi\bdot\left[\frac{1}{2}|\mathbf{u}_{\omega}|^2\mathbf{u}+p_{\omega}\mathbf{u}_{\omega}\right]
        dV\,dt\cr 
        && \hspace{30pt} =\langle\bv_\phi\bdot \btau_w,\psi\rangle
        -\langle Q,\varphi\rangle  -\int_0^T\int_{\Omega}\varphi\grad\mathbf{u}_{\phi}\bdots
        \mathbf{u}_{\omega}\otimes\mathbf{u}_{\omega}\,dV\,dt
        \label{fgRotationalLimit2}
    \end{eqnarray} 
\end{theorem}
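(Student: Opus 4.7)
The plan is to adapt the Duchon--Robert argument for inertial energy dissipation to the rotational flow, with the source term involving $\bv_\phi$ handled by invoking Theorem \ref{theorem1} for the interaction energy. First, from \eqref{pL32Bd} I extract a subsequence $\nu_k\to 0$ along which $p^{\nu_k}_\omega\rightharpoonup p_\omega$ weakly in $L^{3/2}((0,T),L^{3/2}_{\mathrm{loc}}(\Omega))$, and hence distributionally; together with the strong $L^3_{\mathrm{loc}}$ convergence of $\bv^\nu_\omega$, passage to the limit in the weak form of \eqref{NS-omega-mom2} identifies $(\bv_\omega,p_\omega)$ as a weak solution of \eqref{E-omega-mom2}. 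Since $\partial B$ is compact, $\Omega_\epsilon$ has finite volume, so the $L^3$ bounds \eqref{L3Conv} and \eqref{uL3Boundary} imply the $L^2_{\mathrm{loc}}$ and near-wall $L^2$ hypotheses \eqref{L2Conv} and \eqref{uBBound}, while the $L^{3/2}$ bounds \eqref{pL32Bd} and \eqref{pL32Boundary} imply \eqref{pL1Conv} and \eqref{pBBound} via Remark 2. Theorem \ref{theorem1} is therefore applicable along this subsequence, giving the skin-friction limit \eqref{uphi-tau} and the distributional interaction-energy balance \eqref{weakInteractionLimit1}.

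Next, since $(\bv^\nu_\omega,p^\nu_\omega)$ are strong solutions, the identity \eqref{Eom-loc} holds pointwise. Testing against $\varphi\in\bar{D}(\bar{\Omega}\times(0,T))$ and integrating by parts the divergence term (the boundary contributions vanish because $\bv_\phi\bdot\bn=\bv^\nu_\omega\bdot\bn=0$ and $\bv^\nu=0$ on $\partial B$) yields, after isolating the dissipation on the left,
\begin{eqnarray*}
\int_0^T\!\!\int_\Omega\varphi\,\nu|\bomega^\nu|^2\,dV\,dt
&=& \int_0^T\!\!\int_\Omega \partial_t\varphi\,\tfrac{1}{2}|\bv^\nu_\omega|^2\,dV\,dt \\
&& + \int_0^T\!\!\int_\Omega \grad\varphi\bdot\Big[\big(p^\nu_\omega+\tfrac{1}{2}|\bv^\nu_\omega|^2+\bv^\nu_\omega\bdot\bv_\phi\big)\bv^\nu_\omega-\nu\bv^\nu\btimes\bomega^\nu\Big]dV\,dt \\
&& - \int_0^T\!\!\int_\Omega\varphi\,\bv_\phi\bdot(\bv^\nu\btimes\bomega^\nu-\nu\grad\btimes\bomega^\nu)\,dV\,dt.
\end{eqnarray*}
The cubic fluxes $|\bv^\nu_\omega|^2\bv^\nu_\omega$ and $(\bv^\nu_\omega\bdot\bv_\phi)\bv^\nu_\omega$ converge strongly in $L^1_{\mathrm{loc}}$ by the $L^3_{\mathrm{loc}}$ hypothesis and $\bv_\phi\in C^\infty(\bar{\Omega})\cap L^\infty$; the pressure flux converges by weak--strong duality pairing the $L^{3/2}_{\mathrm{loc}}$-weak limit of $p^\nu_\omega$ against the $L^3_{\mathrm{loc}}$-strong limit of $\bv^\nu_\omega$; and the viscous flux is bounded by $\nu\|\grad\varphi\|_\infty\|\bv^\nu\|_{L^2(\mathrm{supp}\,\grad\varphi)}\|\bomega^\nu\|_{L^2(\mathrm{supp}\,\grad\varphi)}=O(\sqrt{\nu})\to 0$, using the local dissipation bound $\sqrt{\nu}\|\bomega^\nu\|_{L^2}=O(1)$ carried by strong NS solutions.

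The essential difficulty is the last integral, which contains a derivative of $\bv^\nu$ and cannot be handled by strong $L^3_{\mathrm{loc}}$ convergence alone. The key observation is that by \eqref{Eint-loc} tested against $\varphi$ and integrated by parts, this integral equals $-\int_0^T\!\int_\Omega\partial_t\varphi\,(\bv^\nu_\omega\bdot\bv_\phi)\,dV\,dt - \int_0^T\!\int_\Omega\grad\varphi\bdot[\cdots]\,dV\,dt$, which is exactly the viscous analogue of the left-hand side of \eqref{weakInteractionLimit1}; Theorem \ref{theorem1} therefore supplies its limit, $-\langle\bv_\phi\bdot\btau_w,\psi\rangle+\int_0^T\!\int_\Omega\varphi\,\grad\bv_\phi\bdots\bv_\omega\otimes\bv_\omega\,dV\,dt$. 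Combining with the convergences of the preceding paragraph, the left-hand side $\int\varphi\,\nu|\bomega^\nu|^2$ converges to a value $\langle Q,\varphi\rangle$; linearity is inherited from the limit, and $\langle Q,\varphi\rangle\geq 0$ for $\varphi\geq 0$ follows from pointwise nonnegativity of $\nu|\bomega^\nu|^2$. Substituting all limits back into the displayed identity yields exactly \eqref{fgRotationalLimit2}.
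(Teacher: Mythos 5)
Your overall strategy coincides with the paper's: test the viscous rotational-energy balance against $\varphi\in\Bar{D}(\Bar{\Omega}\times(0,T))$, pass to the limit in every term except $\int\varphi\,\nu|\bomega^\nu|^2$, define $\langle Q,\varphi\rangle$ as the forced limit, and read off positivity from $Q^\nu\ge 0$. But there is a genuine gap in how you handle the transfer term. You keep it in the raw form $\int\varphi\,\bv_\phi\bdot(\bv^\nu\btimes\bomega^\nu-\nu\grad\btimes\bomega^\nu)$ and convert it via \eqref{Eint-loc}; that conversion is fine, but the resulting expression carries the flux of \eqref{Eint-loc}, namely $\bigl(p_\omega^\nu+\tfrac12|\bv_\omega^\nu|^2+\bv_\omega^\nu\bdot\bv_\phi\bigr)\bv_\phi+\bigl(p_\phi+\tfrac12|\bv_\phi|^2\bigr)\bv_\omega^\nu$, which is \emph{not} the flux in \eqref{weakInteractionLimit1} (equivalently \eqref{visc_total_interaction}); the two differ by $\tfrac12|\bv_\omega|^2\bv_\phi-(\bv_\omega\bdot\bv_\phi)\bv_\omega$. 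So ``Theorem \ref{theorem1} supplies its limit'' is not literally true: the limit of your transfer integral equals the right-hand side of \eqref{weakInteractionLimit1} \emph{minus} $\int_0^T\int_\Omega\grad\varphi\bdot\bigl[\tfrac12|\bv_\omega|^2\bv_\phi-(\bv_\omega\bdot\bv_\phi)\bv_\omega\bigr]dV\,dt$, a term that does not vanish for $\varphi$ non-vanishing on $\partial B$. Correspondingly, ``substituting all limits back'' as you state them does not yield \eqref{fgRotationalLimit2}: your displayed identity has flux $(p_\omega^\nu+\tfrac12|\bv_\omega^\nu|^2+\bv_\omega^\nu\bdot\bv_\phi)\bv_\omega^\nu$ while \eqref{fgRotationalLimit2} has $\tfrac12|\bv_\omega|^2\bv+p_\omega\bv_\omega$, and the mismatch is exactly the dropped term. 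The bookkeeping does close if done consistently: carrying the \eqref{Eint-loc} flux through the limit, the discrepancy cancels against the difference of the two kinetic-energy fluxes and \eqref{fgRotationalLimit2} emerges. This is precisely why the paper works from the outset with the rearranged identity \eqref{rotationTest}, in which the inertial transfer already appears as $-\grad\bu_\phi\bdots\bu_\omega^\nu\otimes\bu_\omega^\nu$ and the flux is already $\tfrac12|\bu_\omega^\nu|^2\bu^\nu+p_\omega^\nu\bu_\omega^\nu$; your write-up needs this repair (one vector-calculus identity, or consistent tracking of the \eqref{Eint-loc} flux).

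A second, smaller gap: you discard the viscous spatial flux $\int\grad\varphi\bdot(\nu\bv^\nu\btimes\bomega^\nu)$ by invoking $\sqrt{\nu}\,\|\bomega^\nu\|_{L^2}=O(1)$, but no such dissipation bound (nor uniform boundedness of initial energies) is among the hypotheses \eqref{L3Conv}--\eqref{pL32Boundary}, and assuming control of $\nu|\bomega^\nu|^2$ is close to assuming what the theorem is meant to produce only as an output. The paper avoids it: using $\grad\bdot(\bv^\nu\btimes\bomega^\nu)=\grad\otimes\grad\bdots(\bv^\nu\otimes\bv^\nu)-\triangle\bigl(\tfrac12|\bv^\nu|^2\bigr)$ and the no-slip condition, the derivatives are integrated by parts onto the test function, giving a bound $\lesssim\nu\,\|\grad\otimes\grad\varphi\|_{L^\infty}\|\bv^\nu\|^2_{L^2((0,T)\times K_\varphi)}=O(\nu)$, which needs only the uniform local $L^2$ bound already implied by your $L^3$ hypotheses; the remaining piece $\nu\bu_\phi\btimes\bomega^\nu$ is then handled via \eqref{uphi-tau}, as you intend. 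Replace your estimate with this argument.
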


\begin{remark}
The result \eqref{viscDissLimit2} is analogous to Proposition 4 of Duchon-Robert \cite{duchon2000inertial}, 
which showed that the inviscid limit of viscous dissipation exists as a non-negative distribution. 
In fact, the linear functional $Q$ in our theorem restricted to the subspace $D(\Omega\times (0,T))$ 
is a non-negative distribution. Thus, by standard representation theory for distributions 
(\cite{choquet1969lectures}, Example 12.5; \cite{rudin1991functional}, Exercise 6.4),  
there is a non-negative Radon measure on $\Omega\times (0,T),$ which we shall also write as $Q$
by an abuse of notation, so that
\be \langle Q,\varphi\rangle= \int_{\Omega\times (0,T)} \varphi(\bx,t) \, Q(d\bx\,dt),
\quad \forall\varphi\in D(\Omega\times (0,T)). \lb{Qrep} 
\ee
We have not shown that a Radon measure $Q$ exists on $\Bar{\Omega}\times (0,T)$ such that 
the representation \eqref{Qrep} holds for all $\varphi \in \Bar{D}(\Bar{\Omega}\times(0,T)),$
but it is plausible to conjecture that this is true. If so, another natural question 
is whether $Q((\partial B)_T)>0$ or $=0.$ The latter question is related to the famous theorem 
of Kato \cite{kato1984remarks,sueur2012kato}, who showed that, if a smooth Euler solution $\bv$ 
exists over the time interval $(0,T),$ then $\bv^\nu\to\bu$ in $L^2((0,T),L^2(\Omega))$ 
when $\bv^\nu_0\to\bu_0$ in $L^2(\Omega)$ and $Q^\nu(\Omega_{c\nu}\times (0,T))\to 0$ for some constant $c>0.$
Under these specific assumptions, a dissipative Euler solution obtained in the inviscid limit plausibly 
must have $Q((\partial B)_T)>0.$
\end{remark}

\begin{remark}
Under the stronger assumptions of Theorem \ref{theorem4} then also the conclusions 
of Theorem \ref{theorem1} hold, so that one can combine Eq.\eqref{weakInteractionLimit1} of that 
theorem with the present result Eq.\eqref{fgRotationalLimit2} to obtain the combined balance: 
\begin{eqnarray} 
        && -\int_0^T\int_{\Omega}\partial_t\varphi\,
        \left[ (\mathbf{u}_{\omega}\bdot\mathbf{u}_{\phi})
         + \frac{1}{2}|\mathbf{u}_{\omega}|^2\right] \, dV\,dt\cr 
        && -\int_0^T\int_{\Omega}\grad\varphi\bdot\left[
        \left( (\mathbf{u}_{\omega}\bdot\mathbf{u}_{\phi})
         + \frac{1}{2}|\mathbf{u}_{\omega}|^2+p_\omega\right)\bv
        +\left( \frac{1}{2}|\mathbf{u}_{\phi}|^2+p_\phi\right)\bv_\omega\right]\, dV\,dt\cr 
        && \hspace{120pt} = -\langle Q,\varphi \rangle
        \label{weaktotalLimit2}
\end{eqnarray}      
This combined quantity was termed the ``relative kinetic energy'' in \cite{eyink2021josephson}, because 
it measures the total energy of the fluid relative to the energy of the potential Euler solution
and it is thus locally conserved when $Q=0.$
\end{remark}

Our next result is a close analogue of Theorem \ref{theorem2} but considering now the spatial flux of 
rotational flow energy. By applying spatial coarse-graining and windowing to Eq.\eqref{NS-omega-mom2} 
for $\bv_\omega^\nu,$ we obtain a balance equation for 
rotational energy in length scales $>\ell$ and wall distances $>h:$
\begin{eqnarray} 
    && \partial_t(\frac{1}{2}\eta_{h,\ell}|\Bar{\mathbf{u}}_{\omega,\ell}^{\nu}|^2) + \grad\bdot(\eta_{h,\ell}\Bar{\mathbf{J}}_{\omega,\ell}^\nu) 
    = \grad\eta_{h,\ell}\bdot\Bar{\mathbf{J}}_{\omega,\ell}^\nu
    -\eta_{h,\ell}\grad\Bar{\mathbf{u}}_{\phi,\ell}\bdots\Bar{\mathbf{u}}_{\omega,\ell}^{\nu}
    \Bar{\mathbf{u}}_{\omega,\ell}^{\nu} \cr 
    && \hspace{15pt} +\eta_{h,\ell}\grad\Bar{\mathbf{u}}_{\omega,\ell}^{\nu}
    \bdots(\tau_\ell(\bv_\omega^\nu,\bv_\omega^\nu) +\tau_\ell(\bv_\phi,\bv_\omega^\nu)+\tau_\ell(\bv_\omega^\nu,\bv_\phi))
    - \nu\eta_{h,\ell}|\Bar{\bomega}^{\nu}_\ell|^2\label{cgwRotaionalEq2}
\end{eqnarray}
with spatial flux of rotational energy given by 
\begin{eqnarray} 
&&    \Bar{\mathbf{J}}_{\omega,\ell}^{\nu} = \frac{1}{2}|\Bar{\mathbf{u}}_{\omega,\ell}^{\nu}|^2\Bar{\mathbf{u}} ^{\nu}_\ell
+ \Bar{p}_{\omega,\ell}\Bar{\mathbf{u}}_{\omega,\ell}^{\nu}-\nu\Bar{\mathbf{u}}_{\omega,\ell}^{\nu}\btimes\Bar{\bomega}^{\nu}_\ell \cr
&& \hspace{60pt} + (\tau_\ell(\bv_\omega^\nu,\bv_\omega^\nu) +\tau_\ell(\bv_\phi,\bv_\omega^\nu)+\tau_\ell(\bv_\omega^\nu,\bv_\phi))\bdot\Bar{\mathbf{u}}_{\omega,\ell}^{\nu}
\label{Jom-def} \end{eqnarray} 
The balance equation \eqref{cgwRotaionalEq2} holds also for the limiting fields $(\bv_\omega,p_\omega)$ 
of Theorem \ref{theorem4}, as stated by the following easy Proposition, with proof left to the reader:
\begin{proposition}
Under the assumptions (\ref{L3Conv})-(\ref{viscDissLimit2}) of Theorem \ref{theorem4}, the following equation 
holds pointwise for $\bx\in\Omega$ and distributionally for $t\in (0,T)$:
\begin{eqnarray} 
    && \partial_t(\frac{1}{2}\eta_{h,\ell}|\Bar{\mathbf{u}}_{\omega,\ell}|^2) + \grad\bdot(\eta_{h,\ell}\Bar{\mathbf{J}}_{\omega,\ell}) 
    = \grad\eta_{h,\ell}\bdot\Bar{\mathbf{J}}_{\omega,\ell}
    -\eta_{h,\ell}\grad\Bar{\mathbf{u}}_{\phi,\ell}\bdots\Bar{\mathbf{u}}_{\omega,\ell}
    \Bar{\mathbf{u}}_{\omega,\ell}\cr 
    && \hspace{35pt} +\eta_{h,\ell}\grad\Bar{\mathbf{u}}_{\omega,\ell}
    \bdots(\tau_\ell(\bv_\omega,\bv_\omega) +\tau_\ell(\bv_\phi,\bv_\omega)+\tau_\ell(\bv_\omega,\bv_\phi))
    \label{cgwEulerRotationalEq2}
\end{eqnarray}
where $\Bar{\mathbf{J}}_{\omega,\ell}$ is given by the formula \eqref{Jom-def} with $\nu\mapsto 0,$ 
$(\bv_\omega^\nu,p_\omega^\nu)\mapsto (\bv_\omega,p_\omega).$ 
\end{proposition}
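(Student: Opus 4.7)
The plan is to deduce the identity directly from the weak form of the Euler equation \eqref{E-omega-mom2} for $(\bv_\omega, p_\omega)$ by mollification, then take the $L^2$-inner product (pointwise in $\bx$) with $\Bar{\mathbf{u}}_{\omega,\ell}$, and finally multiply by the cutoff $\eta_{h,\ell}$. Theorem~\ref{theorem4} provides $\bv_\omega\in L^2(0,T;L^2_{\text{loc}}(\Omega))$ and $p_\omega\in L^{3/2}(0,T;L^{3/2}_{\text{loc}}(\Omega))$, together with the fact that $(\bv_\omega,p_\omega)$ solves \eqref{E-omega-mom2} distributionally, so in particular all the quadratic combinations $\bv_\omega\otimes\bv_\omega,\ \bv_\omega\otimes\bv_\phi,\ \bv_\phi\otimes\bv_\omega$ are in $L^1_{\text{loc}}(\Omega\times(0,T))$ and can safely be mollified.

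First I would mollify \eqref{E-omega-mom2} with $G_\ell$. Because $G_\ell(\bx-\cdot)$ is supported in the ball of radius $\ell$ around $\bx$, for every $\bx\in\Omega^\ell$ the mollified equation
\begin{equation*}
\partial_t\Bar{\mathbf{u}}_{\omega,\ell} + \grad\bdot\bigl(\overline{\bv_\omega\otimes\bv_\omega}_\ell + \overline{\bv_\omega\otimes\bv_\phi}_\ell + \overline{\bv_\phi\otimes\bv_\omega}_\ell\bigr) + \grad\Bar{p}_{\omega,\ell} = 0,\qquad \grad\bdot\Bar{\mathbf{u}}_{\omega,\ell}=0
\end{equation*}
holds pointwise in $\bx$ and distributionally in $t$; the right-hand side is smooth in $\bx$, and $\partial_t\Bar{\mathbf{u}}_{\omega,\ell}$ is the pointwise-in-$\bx$ time derivative of an absolutely continuous function of $t$ with values in $C^\infty(\Omega^\ell)$. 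Using the standard Leonard decomposition $\overline{(fg)}_\ell = \bar f_\ell\bar g_\ell + \tau_\ell(f,g)$ splits the advective divergence into a ``resolved'' product plus a gradient of the subscale stress.

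Next I would take the dot product with $\Bar{\mathbf{u}}_{\omega,\ell}$. Since $\Bar{\mathbf{u}}_{\omega,\ell}$ and $\Bar{\mathbf{u}}_{\phi,\ell}$ are both divergence-free on $\Omega^\ell$, standard vector-calculus manipulations give
\begin{align*}
\Bar{\mathbf{u}}_{\omega,\ell}\bdot\grad\bdot(\Bar{\mathbf{u}}_{\omega,\ell}\otimes\Bar{\mathbf{u}}_{\omega,\ell}) &= \grad\bdot\bigl(\tfrac{1}{2}|\Bar{\mathbf{u}}_{\omega,\ell}|^2\Bar{\mathbf{u}}_{\omega,\ell}\bigr),\\
\Bar{\mathbf{u}}_{\omega,\ell}\bdot\grad\bdot(\Bar{\mathbf{u}}_{\omega,\ell}\otimes\Bar{\mathbf{u}}_{\phi,\ell}) &= \grad\bdot\bigl(\tfrac{1}{2}|\Bar{\mathbf{u}}_{\omega,\ell}|^2\Bar{\mathbf{u}}_{\phi,\ell}\bigr),\\
\Bar{\mathbf{u}}_{\omega,\ell}\bdot\grad\bdot(\Bar{\mathbf{u}}_{\phi,\ell}\otimes\Bar{\mathbf{u}}_{\omega,\ell}) &= \grad\Bar{\mathbf{u}}_{\phi,\ell}\bdots\Bar{\mathbf{u}}_{\omega,\ell}\otimes\Bar{\mathbf{u}}_{\omega,\ell},
\end{align*}
together with $\Bar{\mathbf{u}}_{\omega,\ell}\bdot\grad\Bar{p}_{\omega,\ell}=\grad\bdot(\Bar{p}_{\omega,\ell}\Bar{\mathbf{u}}_{\omega,\ell})$ and, for each subscale tensor, $\Bar{\mathbf{u}}_{\omega,\ell}\bdot\grad\bdot\tau_\ell = \grad\bdot(\tau_\ell\bdot\Bar{\mathbf{u}}_{\omega,\ell}) - \grad\Bar{\mathbf{u}}_{\omega,\ell}\bdots\tau_\ell$. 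Collecting the divergence terms into the flux $\Bar{\mathbf{J}}_{\omega,\ell}$ given by \eqref{Jom-def} with $\nu=0$ (and using $\Bar{\mathbf{u}}_\ell = \Bar{\mathbf{u}}_{\omega,\ell}+\Bar{\mathbf{u}}_{\phi,\ell}$) produces the unwindowed rotational-energy balance on $\Omega^\ell\times(0,T)$.

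Finally, I would multiply through by $\eta_{h,\ell}$, which is supported in $\{d(\bx)>h\}\subset\Omega^\ell$ since $h>\ell$, and apply the product rule $\eta_{h,\ell}\grad\bdot\Bar{\mathbf{J}}_{\omega,\ell} = \grad\bdot(\eta_{h,\ell}\Bar{\mathbf{J}}_{\omega,\ell}) - \grad\eta_{h,\ell}\bdot\Bar{\mathbf{J}}_{\omega,\ell}$ and $\eta_{h,\ell}\partial_t(\tfrac12|\Bar{\mathbf{u}}_{\omega,\ell}|^2)=\partial_t(\tfrac12\eta_{h,\ell}|\Bar{\mathbf{u}}_{\omega,\ell}|^2)$ (since $\eta_{h,\ell}$ is time-independent), which yields exactly \eqref{cgwEulerRotationalEq2}. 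The only mild obstacle is to justify rigorously that the time derivative $\partial_t\Bar{\mathbf{u}}_{\omega,\ell}$ exists pointwise in $\bx$ as a distribution in $t$ and that $\Bar{\mathbf{u}}_{\omega,\ell}\bdot\partial_t\Bar{\mathbf{u}}_{\omega,\ell} = \tfrac12\partial_t|\Bar{\mathbf{u}}_{\omega,\ell}|^2$ as such; this follows from the smoothing effect of $G_\ell$, which places $\Bar{\mathbf{u}}_{\omega,\ell}(\cdot,t)$ in $C^\infty(\Omega^\ell)$ and makes $t\mapsto\Bar{\mathbf{u}}_{\omega,\ell}(\bx,t)$ absolutely continuous thanks to the mollified momentum equation, so the Leibniz rule is classical.
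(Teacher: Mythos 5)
Your proposal is correct and is exactly the computation the paper intends (it leaves the proof "to the reader" and describes the viscous analogue \eqref{cgwRotaionalEq2} as obtained by the same coarse-graining and windowing of the momentum equation, in the spirit of Duchon--Robert): mollify \eqref{E-omega-mom2}, contract with $\Bar{\mathbf{u}}_{\omega,\ell}$ using the divergence-free structure and the Leonard decomposition, and multiply by $\eta_{h,\ell}$. The vector identities, the identification of the flux \eqref{Jom-def} with $\nu=0$ via $\Bar{\mathbf{u}}_\ell=\Bar{\mathbf{u}}_{\omega,\ell}+\Bar{\mathbf{u}}_{\phi,\ell}$, and the justification of the time-derivative chain rule from the $W^{1,3/2}$-in-time regularity of $\Bar{\mathbf{u}}_{\omega,\ell}(\bx,\cdot)$ are all sound.
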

\noindent Our next main result then relates the spatial cascade of rotational energy toward the wall
and the skin friction term in the inviscid Josephson-Anderson relation \eqref{dJA-zero}. However,
the statement is more complex than our previous Theorem \ref{theorem2} for interaction energy,
because it involves also the quantity $\eta_{h,\ell}\grad\Bar{\mathbf{u}}_{\omega,\ell}\bdots\tau_\ell(\bv_\omega,\bv_\omega)$
which represents scale-cascade of rotational energy and as well the anomalous energy dissipation $Q$
in the rotational wake. Thus, we find: 
\begin{theorem}\label{theorem5} 
    Let $(\bv_\omega,p_\omega)$ be the limiting weak solutions of the inviscid equation \eqref{E-omega-mom2}
obtained in Theorem \ref{theorem4}. 
For $0<\ell<h$, and $\forall\varphi\in\Bar{D}(\Bar{\Omega}\times(0,T))$ with $\psi = \varphi|_{\partial B}$,
    \begin{eqnarray} 
        &-&\lim_{h,\ell\to0}\int_0^T\int_{\Omega}\varphi\left(\grad\eta_{h,\ell}\bdot\Bar{\mathbf{J}}_{\omega,\ell} + \eta_{h,\ell}\grad\Bar{\mathbf{u}}_{\omega,\ell}\bdots\tau_\ell(\bv_\omega,\bv_\omega)\right)\,dV\,dt \cr 
        && \hspace{80pt} = -\langle\bv_\phi\bdot \btau_w,\psi\rangle
        + \langle Q,\varphi \rangle. 
    \lb{Jom-bal}     
    \end{eqnarray} 
\end{theorem}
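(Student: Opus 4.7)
The plan is to work directly with the coarse-grained inviscid rotational energy balance \eqref{cgwEulerRotationalEq2} from the preceding Proposition: rearrange it algebraically to isolate the combination appearing inside the integrand of \eqref{Jom-bal}, pair with $-\varphi$ and integrate over $\Omega\times(0,T)$, and then match the resulting $h,\ell\to 0$ limit against the weak rotational energy balance \eqref{fgRotationalLimit2} already proved in Theorem \ref{theorem4}.

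Concretely, solving \eqref{cgwEulerRotationalEq2} for $\grad\eta_{h,\ell}\bdot\Bar{\bJ}_{\omega,\ell}+\eta_{h,\ell}\grad\Bar{\bu}_{\omega,\ell}\bdots\tau_\ell(\bv_\omega,\bv_\omega)$, multiplying by $-\varphi$, and integrating by parts in $t$ (justified by the compact temporal support of $\varphi$) and in $\bx$ (boundary terms at $\partial B$ vanish because $\eta_{h,\ell}$ is supported in $\{d(\bx)>h\}$, and terms at infinity vanish by the spatial compactness of $\varphi$), one obtains the identity
$$-\int_0^T\!\!\!\int_\Omega\varphi\bigl(\grad\eta_{h,\ell}\bdot\Bar{\bJ}_{\omega,\ell}+\eta_{h,\ell}\grad\Bar{\bu}_{\omega,\ell}\bdots\tau_\ell(\bv_\omega,\bv_\omega)\bigr)\,dV\,dt=I_1+I_2-I_3+I_4,$$
where $I_1$ pairs $\partial_t\varphi$ with $\tfrac{1}{2}\eta_{h,\ell}|\Bar{\bu}_{\omega,\ell}|^2$, $I_2$ pairs $\grad\varphi$ with $\eta_{h,\ell}\Bar{\bJ}_{\omega,\ell}$, $I_3:=\int\!\!\int\varphi\,\eta_{h,\ell}\grad\Bar{\bu}_{\phi,\ell}\bdots\Bar{\bu}_{\omega,\ell}\Bar{\bu}_{\omega,\ell}$ is the ``production'' term, and $I_4:=\int\!\!\int\varphi\,\eta_{h,\ell}\grad\Bar{\bu}_{\omega,\ell}\bdots(\tau_\ell(\bv_\phi,\bv_\omega)+\tau_\ell(\bv_\omega,\bv_\phi))$ is the cross-commutator.

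In the limit $h,\ell\to 0$, the terms $I_1$ and $I_3$ converge by dominated convergence to $\tfrac{1}{2}\int\!\!\int\partial_t\varphi\,|\bv_\omega|^2$ and $\int\!\!\int\varphi\,\grad\bv_\phi\bdots\bv_\omega\otimes\bv_\omega$ respectively, using the $L^3_{\text{loc}}$ convergence from \eqref{L3Conv} (interpolated to $L^2_{\text{loc}}$ on bounded sets) together with the smoothness $\bv_\phi\in C^\infty(\Bar{\Omega})$. The flux term $I_2$ converges to $\int\!\!\int\grad\varphi\bdot[\tfrac{1}{2}|\bv_\omega|^2\bv+p_\omega\bv_\omega]$: the viscous contribution is already absent in the limit Proposition, and each subscale factor $\tau_\ell(\cdot,\cdot)\bdot\Bar{\bu}_{\omega,\ell}$ inside $\Bar{\bJ}_{\omega,\ell}$ vanishes in $L^1_{\text{loc}}$ via standard Constantin--E--Titi commutator estimates under $L^3_{\text{loc}}$ regularity. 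Comparing $\lim(I_1+I_2-I_3)$ with \eqref{fgRotationalLimit2} of Theorem \ref{theorem4} --- which gives $\lim(I_1+I_2)=-\langle\bv_\phi\bdot\btau_w,\psi\rangle+\langle Q,\varphi\rangle+\lim I_3$ --- the production term cancels and one recovers the target right-hand side $-\langle\bv_\phi\bdot\btau_w,\psi\rangle+\langle Q,\varphi\rangle$.

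The principal obstacle is proving $I_4(h,\ell)\to 0$, a commutator-type estimate exploiting $\bv_\phi\in C^\infty(\Bar{\Omega})$. The Constantin--E--Titi representation $\tau_\ell(\bv_\phi,\bv_\omega)(\bx)=\int G_\ell(\br)\delta_\br\bv_\phi(\bx)\otimes\delta_\br\bv_\omega(\bx)\,d\br$, combined with the pointwise bound $|\delta_\br\bv_\phi|\leq|\br|\,\|\grad\bv_\phi\|_\infty$, yields $\|\tau_\ell(\bv_\phi,\bv_\omega)\|_{L^{3/2}_{\text{loc}}}=O(\ell)$; however a direct H\"older pairing against $\|\grad\Bar{\bu}_{\omega,\ell}\|_{L^3_{\text{loc}}}=O(\ell^{-1})$ only yields boundedness, not decay. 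To extract genuine vanishing I would integrate by parts once more, transferring $\grad$ from $\Bar{\bu}_{\omega,\ell}$ onto $\varphi\eta_{h,\ell}\tau_\ell(\cdots)$; the resulting bulk piece then benefits from the improved bound $\|\grad\tau_\ell(\bv_\phi,\bv_\omega)\|_{L^{3/2}_{\text{loc}}}=O(1)$ (from smoothness of $\bv_\phi$) paired against $\Bar{\bu}_{\omega,\ell}\in L^3_{\text{loc}}$ with an $O(\ell)$ Taylor-type prefactor. The most delicate step is the shell contribution carrying $|\grad\eta_{h,\ell}|\sim\ell^{-1}$ concentrated on $\Omega_{h+\ell}\setminus\Omega_h$ of thickness $\ell$, which must be controlled using only the near-wall $L^3(\Omega_\epsilon)$ integrability \eqref{uL3Boundary} --- no $L^\infty$ near-wall bound is available here, in contrast to Theorem \ref{theorem3} --- and this shell estimate is the technical heart of the argument.
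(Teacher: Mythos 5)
Your overall architecture coincides with the paper's: test the coarse-grained inviscid balance \eqref{cgwEulerRotationalEq2} against $\varphi$, isolate the combination appearing in \eqref{Jom-bal}, pass to the limit term by term (your $I_1$, $I_2$, $I_3$ converge exactly as you say, via Lemma \ref{lemma1}), and compare with \eqref{fgRotationalLimit2}; the sign bookkeeping and the cancellation of the production term are correct. The one load-bearing step is $I_4\to 0$, and there your argument has a genuine gap. After your extra integration by parts, the bulk piece is $\int_0^T\int_\Omega\varphi\,\eta_{h,\ell}\,(\grad\bdot\tau_\ell(\bv_\phi,\bv_\omega))\bdot\Bar{\mathbf{u}}_{\omega,\ell}\,dV\,dt$, and the bound $\|\grad\bdot\tau_\ell(\bv_\phi,\bv_\omega)\|_{L^{3/2}_{\text{loc}}}=O(1)$ paired with $\Bar{\mathbf{u}}_{\omega,\ell}\in L^3_{\text{loc}}$ gives only $O(1)$ --- exactly the ``boundedness, not decay'' obstruction you yourself identified for the direct pairing. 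The additional ``$O(\ell)$ Taylor-type prefactor'' you invoke does not exist: the $O(\ell)$ gain from $\delta\bv_\phi$ is already spent cancelling the $\ell^{-1}$ carried by the mollifier gradient, and the remaining increment $\delta\bv_\omega$ of a merely $L^3$ field contributes no power of $\ell$. What actually makes this term vanish (and what the paper does, with no extra integration by parts) is a density argument: write $\bv_\omega=\mathbf{v}^\epsilon+\mathbf{\Delta}^\epsilon$ with $\mathbf{v}^\epsilon$ smooth and $\|\mathbf{\Delta}^\epsilon\|_{L^3}<\epsilon$, use the Constantin--E--Titi increment representation \eqref{cet-decomposition}, and note that the $\mathbf{v}^\epsilon$ contribution is $O(\ell)$ at fixed $\epsilon$ (two smooth increments beat the single $\ell^{-1}$, as in \eqref{boundA}) while the $\mathbf{\Delta}^\epsilon$ contribution is $O(\epsilon)$ uniformly in $h,\ell$, as in \eqref{boundB}. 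Equivalently one may appeal to continuity of translation, $\sup_{|\br|<\ell}\|\delta\bv_\omega(\br;\cdot)\|_{L^{3/2}(K)}\to 0$; but some such $o(1)$ input replacing your $O(\ell)$ claim is indispensable.

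Secondarily, the step you single out as ``the technical heart'' --- the shell term created by your integration by parts, carrying $|\grad\eta_{h,\ell}|\sim\ell^{-1}$ on $\Omega_{h+\ell}\backslash\Omega_h$ --- is in fact benign: the pointwise bound $|\tau_\ell(\bv_\phi,\bv_\omega)|\lesssim\ell\,\|\grad\bv_\phi\|_{L^\infty}\,\overline{|\bv_\omega|}_\ell$ cancels the $\ell^{-1}$, and H\"older on a set of measure $O(\ell)$ using only \eqref{uL3Boundary} leaves a factor $O(\ell^{1/3})$. In the paper's proof this shell term never appears because the derivative is left on $\Bar{\mathbf{u}}_{\omega,\ell}$ and represented via $(\grad G)_\ell$; your reorganization is workable but buys nothing, while leaving the genuinely hard bulk estimate unresolved.
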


The result \eqref{Jom-bal} is not entirely satisfactory, because one would expect that the first 
and second terms on the lefthand side equal respectively the first and second terms 
on the righthand side, and not only when added together. We can obtain this physically more natural 
statement, but only by assuming conditions under which the first terms of each side of \eqref{Jom-bal}
vanish: 
\begin{theorem}\label{theorem6} 
Let $(\mathbf{u}_\omega, p_\omega)$ be the limiting weak solutions of the inviscid equation \eqref{E-omega-mom2}
obtained in Theorem \ref{theorem4}, with $\mathbf{u}_\omega\in L^3((0,T),L^3_{\text{loc}}(\Omega))\cap L^3((0,T),L^3(\Omega_{\epsilon}))$ and $p_\omega\in L^{\frac{3}{2}}((0,T),L_{\text{loc}}^{\frac{3}{2}}(\Omega))\cap L^{\frac{3}{2}}(0,T;L^{\frac{3}{2}}(\Omega_{\epsilon}))$ for some $\epsilon>0$. 
Assume further the stronger near-wall boundedness property 
            \begin{align}
                \mathbf{u}_\omega \in L^3((0,T),L^{\infty}(\Omega_{\epsilon})),\;\;\;\;
                p_{\omega}\in L^{\frac{3}{2}}((0,T), L^{\infty}(\Omega_{\epsilon}))
                \label{wallboundednessL3}
            \end{align}
and the no-flow-through condition at the boundary in the sense
      \begin{align}
            \lim_{\delta\to0}\norm{\mathbf{n}\bdot\mathbf{u}_\omega}_{L^3((0,T),L^{\infty}(\Omega_{\delta}))} = 0. 
            \label{wallnormalL3}
        \end{align}
Then $\bv_\phi\bdot\btau_w=0$ in $D'((\partial B)_T)$ and for all $\textbf{Ext}\in {\mathcal E},$
\begin{align}
    \lim_{h,\ell\to0}\mathbf{Ext}^*\left[\grad\eta_{h,\ell}\bdot\Bar{\mathbf{J}}_{\omega,\ell}\right]=0.
\end{align}
In consequence, an inviscid form of the balance equation \eqref{Eom-loc} for rotational energy 
holds distributionally, in the sense that for all $\varphi\in \Bar{D}(\Bar{\Omega}\times (0,T))$, 
   \begin{eqnarray} 
        && 
        -\int_0^T\int_{\Omega}\frac{1}{2}\partial_t\varphi\,|\mathbf{u}_{\omega}|^2\,dV\, dt
        -\int_0^T\int_{\Omega}\grad\varphi\bdot\left[\frac{1}{2}|\mathbf{u}_{\omega}|^2\mathbf{u}
        +p_{\omega}\mathbf{u}_{\omega}\right]
        dV\,dt\label{fgRotaionalLimit1}\cr
        && \hspace{50pt} =-\langle Q,\varphi \rangle  -\int_0^T\int_{\Omega}\varphi\grad\mathbf{u}_{\phi}\bdots\mathbf{u}_{\omega}\mathbf{u}_{\omega}\,dV\,dt
        \label{fgRotationalLimit3}
    \end{eqnarray}
and furthermore     
\begin{align}
    -\lim_{h,\ell\to0}\int_0^T\int_{\Omega}\varphi
    \, \eta_{h,\ell}\grad\Bar{\mathbf{u}}_{\omega,\ell}
    \bdots \tau_\ell(\bv_\omega,\bv_\omega)\, dV\,dt = \langle Q,\varphi\rangle
    \label{roughnessDissipative}
\end{align}
\end{theorem}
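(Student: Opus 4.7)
The plan is to derive all four conclusions from three ingredients: (i) reducing the vanishing of $\bv_\phi\cdot\btau_w$ to Theorem \ref{theorem3}, (ii) establishing vanishing of the spatial-flux limit directly from the stronger near-wall hypotheses, and (iii) substituting (i) and (ii) into the balances already produced by Theorems \ref{theorem4} and \ref{theorem5}. For (i), I would observe that since $\Omega_\epsilon$ has finite measure and $T<\infty$, the hypotheses $\bv_\omega\in L^3((0,T),L^\infty(\Omega_\epsilon))$, $p_\omega\in L^{3/2}((0,T),L^\infty(\Omega_\epsilon))$, and the $L^3_t$ vanishing of $\|\bn\cdot\bv_\omega\|_{L^\infty(\Omega_\delta)}$ imply the $L^2_t$ counterparts \eqref{wallboundedness}, \eqref{wallnormal} of Theorem \ref{theorem3}, so that theorem applies verbatim and yields $\bv_\phi\cdot\btau_w=0$ in $D'((\partial B)_T)$.

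For (ii), I would write $\grad\eta_{h,\ell}=\theta'_{h,\ell}(d(\bx))\,\bn(\bx)$, using the smooth extension of the unit normal via the nearest-point projection, and reduce the problem to controlling the scalar $\bn\cdot\Bar{\mathbf{J}}_{\omega,\ell}$ on the shell $\Omega_{h+\ell}\setminus\Omega_h$, weighted by an approximate surface delta $\theta'_{h,\ell}$ of unit $L^1$-mass. The cubic advective piece $\tfrac12|\Bar{\bv}_{\omega,\ell}|^2(\Bar{\bv}_\ell\cdot\bn)$ splits into a contribution involving $\Bar{\bv}_{\omega,\ell}\cdot\bn$, which tends to zero in $L^3((0,T),L^\infty(\Omega_{h+\ell}))$ by the no-flow-through \eqref{wallnormalL3} (modulo an $O(\ell)$ commutator of $\bn$ with the mollifier, absorbed via the Lipschitz regularity of the extended normal), plus a contribution involving $\Bar{\bv}_{\phi,\ell}\cdot\bn$ of size $O(h+\ell)$ coming from the potential-flow boundary condition $\bv_\phi\cdot\bn|_{\partial B}=0$ and the smoothness of $\bv_\phi\in C^\infty(\bar\Omega)$. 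The pressure term $\Bar p_{\omega,\ell}(\Bar{\bv}_{\omega,\ell}\cdot\bn)$ is handled identically. For the commutator-stress pieces $\bn\cdot\tau_\ell(\mathbf{f},\mathbf{g})\cdot\Bar{\bv}_{\omega,\ell}$, the essential algebraic identity is
\begin{align*}
\bn(\bx)\cdot\tau_\ell(\mathbf{f},\mathbf{g})(\bx) \;=\; \tau_\ell(\bn\cdot\mathbf{f},\mathbf{g})(\bx) \;+\; O(\ell)\,\|\mathbf{f}\|_{L^\infty}\|\mathbf{g}\|_{L^\infty},
\end{align*}
obtained by replacing $n^i(\bx)$ with $n^i(\by)$ inside the mollification at cost $O(|\bx-\by|)\le O(\ell)$. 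Each choice $(\mathbf{f},\mathbf{g})\in\{(\bv_\omega,\bv_\omega),(\bv_\phi,\bv_\omega),(\bv_\omega,\bv_\phi)\}$ then exposes a small factor $\bn\cdot\bv_\omega$ or $\bn\cdot\bv_\phi$, while the pointwise bound $|\tau_\ell(\mathbf{f},\mathbf{g})|\le 2\|\mathbf{f}\|_{L^\infty}\|\mathbf{g}\|_{L^\infty}$ controls the remaining factor. A Hölder estimate in time with exponents $(3,3,3)$ or $(3/2,3)$, together with the uniform $L^3_tL^\infty_x$ bound on $\bv_\omega$ and the $L^{3/2}_tL^\infty_x$ bound on $p_\omega$, then yields $\int_0^T\int_\Omega\varphi\,\grad\eta_{h,\ell}\cdot\Bar{\mathbf{J}}_{\omega,\ell}\,dV\,dt\to 0$ for every $\varphi\in\Bar D(\Bar\Omega\times(0,T))$, which in particular contains every extension $\mathbf{Ext}(\psi)$.

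Having secured (i) and (ii), (c) follows from \eqref{fgRotationalLimit2} by zeroing the skin-friction term, and (d) follows from \eqref{Jom-bal} by zeroing both the skin-friction and the flux contributions --- the latter being legitimate precisely because (ii) has been established for arbitrary $\varphi\in\Bar D$, not merely for extensions. The main obstacle is the commutator identity above: the naive bound $|\tau_\ell|\lesssim\|\bv_\omega\|_{L^\infty}^2$ is only uniform in $\ell$ and becomes $O(1)$ when amplified by $\theta'_{h,\ell}\sim 1/\ell$ and integrated over a shell of thickness $\sim\ell$, so smallness must come from re-expressing $\bn\cdot\tau_\ell$ as $\tau_\ell(\bn\cdot\,\cdot\,,\,\cdot\,)$ modulo $O(\ell)$ and exploiting the vanishing of the normal velocity at the wall.
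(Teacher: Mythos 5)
Your proposal is correct and follows essentially the same route as the paper: reduce $\bv_\phi\bdot\btau_w=0$ to Theorem \ref{theorem3} (the $L^3$/$L^{3/2}$ hypotheses trivially imply the $L^2$/$L^1$ ones on the finite-measure set $\Omega_\epsilon\times(0,T)$), bound $\grad\eta_{h,\ell}\bdot\Bar{\mathbf{J}}_{\omega,\ell}$ on the shell $\Omega_{h+\ell}\setminus\Omega_h$ by commuting the extended normal $\bn\circ\pi$ into the mollification at a cost controlled by its continuity so as to expose the small factor $\bn\bdot\bv_\omega$, with H\"older exponents $(3,3,3)$ and $(3/2,3)$ in time, and then substitute into the balances of Theorems \ref{theorem4} and \ref{theorem5}. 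Your observation that the flux-vanishing must hold for arbitrary $\varphi\in\Bar{D}(\Bar{\Omega}\times(0,T))$, not only for extensions, in order to peel \eqref{roughnessDissipative} off of \eqref{Jom-bal}, is exactly the point the paper also makes.
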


\begin{remark} 
The results \eqref{fgRotationalLimit3} and \eqref{roughnessDissipative} of Theorem \ref{theorem6} 
correspond exactly to those in Proposition 2 of Duchon \& Robert \cite{duchon2000inertial}, 
who considered $L^3$ weak solutions of incompressible Euler equation on a torus and derived 
a local energy balance containing a term due to  ``inertial energy dissipation,'' analogous 
to the lefthand side of our Eq.\eqref{roughnessDissipative}. Furthermore, the equality in 
Eq.\eqref{roughnessDissipative} corresponds to the result in Proposition 4 of \cite{duchon2000inertial},
stating that the ``inertial dissipation'' coincides with the zero-viscosity limit of the 
viscous energy dissipation, assuming convergence to the weak Euler solution in strong $L^3$ sense 
as the viscosity tends to zero. 
\end{remark}

\vspace{-20pt} 
\textcolor{black} {
\begin{remark}
Theorem  \ref{theorem6} implies a strong-weak uniqueness result for any viscosity weak 
solution $\bu_\omega\in L^\infty((0,T),L^2(\Omega)).$ In fact, the quantity 
\be E_\omega(t):=\frac{1}{2}\|\bu_\omega(t)\|^2_{L^2(\Omega)}
=\frac{1}{2}\|\bu(t)-\bu_\phi(t)\|^2_{L^2(\Omega)}\ee 
coincides with the ``relative energy'' used by Wiedemann \cite{wiedemann2018weak} 
to prove  strong-weak uniqueness. We do not give full details here, but just sketch main ideas. 
Under the assumptions of Theorem \ref{theorem4} only we obtain that for a.e. $t\in (0,T)$ 
\be E_\omega(t)-E_\omega(0)\leq \langle \bu_\phi\bdot\btau_w,1\rangle
-\int_0^t \int_\Omega \grad\bu_\phi\bdots\bu_\omega\otimes\bu_\omega\,dV\,ds \lb{Ebd} \ee 
This inequality follows from the result \eqref{fgRotationalLimit2} of Theorem \ref{theorem4}
by substituting $\varphi=\chi^\epsilon_{[0,t]}\chi^\eta_{U_R}$ where $\chi^\epsilon_{[0,t]}$
is an $\epsilon-$smoothed characteristic function of $[0,T]$ and $\chi^\eta_{U_R}$ is an 
$\eta-$smoothed characteristic function of $U_R=\Omega\cap B(\bzed,R)$ and by then taking 
the limits $\epsilon\to 0$ and $R\to\infty.$ Therefore, if we use the consequence 
$\bu_\phi\bdot\btau_w=0$ of Theorem \ref{theorem6}, then it easily follows from \eqref{Ebd} that
\be E_\omega(t)\leq E_\omega(0) + C\rho\int_0^t \|\grad \bu_\phi(s)\|_{L^\infty(\Omega)}
E_\omega(s)\, ds \ee 
and thus by Gronwall inequality
\be E_\omega(t)\leq E_\omega(0)\exp\left(C\rho\int_0^t  \|\grad\bu_\phi(s)\|_{L^\infty(\Omega)}
\, ds\right). \ee
In particular, we see that if $\bu_0=\bu_\phi,$ then $\bu(t)=\bu_\phi$ for a.e $t\in [0,T].$ 
Note that \eqref{Ebd} can be derived also when initial data $\bu_0^\nu\to \bu_\phi$ strongly 
in $L^2(\Omega),$ by taking $\varphi=\chi^\epsilon_{[0,t]}\chi^\eta_{U_R}$ in \eqref{rotationTest}
below for $\nu>0,$ and then taking the successive limits $\epsilon\to 0,$ $\nu\to 0,$ and $R\to\infty.$
Thus, $\bu_\phi\bdot\btau_w=0$ implies strong-weak uniqueness even if the initial data satisfy 
$\bu_0^\nu=\bzed$ at $\partial \Omega$ because of a thin boundary layer. 
\end{remark}
} 

\vspace{-15pt} 
\section{Preliminaries}\label{prelim} 

We present briefly here some background material needed for our proofs. 

\subsection{Extension operators.} Extension operators of the type discussed in the introduction are defined more 
precisely by 
\begin{align}
    \ext:\psi\in D((\partial B)_T) \mapsto \varphi\in\Bar{D}(\Bar{\Omega}\times(0,T))
\end{align}
as linear and continuous operators, satisfying pointwise equality $\phi|_{(\partial B)_T}=\psi.$ 
Here $\textbf{Ext}$ is continuous in the sense that
for a multi-index $\alpha = (\alpha_1, \alpha_2, \alpha_3,\alpha_4)$ with $|\alpha| \le N$, $\forall (\bx,t)\in\Omega\times(0,T)$
\begin{align}
    |D^{\alpha}\varphi(\mathbf{x},t)|
    =|D^{\alpha}\textbf{Ext}(\psi)(\mathbf{x},t)|\lesssim \sup_{i\in I}p_{N,m,i}(\psi)\label{extensionBound}
\end{align}
where $\lesssim$ denotes inequality with constant prefactor depending on the domain $(\partial B)_T$ 
and the extension operator $\textbf{Ext}$. Here, $(p_{N,m,i})$ are the seminorms associated with the Fréchet space $D((\partial B)_T)$, where $i\in I$ index a system of smooth charts $\cup_{i\in I}(V_{i},\phi_{i}),$ 
$\phi_i: V_i\subset (\partial B)_T\to {\mathbb R}^3,$ and $m$ is the index of a fundamental increasing sequence 
$(K_m^{(i)})$ of compact subsets of $\phi_i(V_i).$ 
See section 2 in \cite{quan2022inertial}. 

We denote by $\mathcal{E}$ the set of all such linear and continuous operators from $D((\partial B)_T)$ to 
$\bar{D}(\bar{\Omega}\times(0,T))$.
We show that $\cE$ is non-empty by constructing an extension operator explicitly.
For $0 < \epsilon < \eta(\Omega)$, we define 
\begin{align}
    \ext^0: \psi\in D((\partial B)_T) 
    \mapsto \varphi \in \Bar{D}(\Bar{\Omega}\times (0,T)) 
\end{align}
by means of the explicit expression 
\begin{align}
    \varphi(\mathbf{x},t) = 
        \begin{cases} \exp\left(-\frac{d(\mathbf{x})}{\epsilon-d(\mathbf{x})}\right)
            \psi(\pi(\mathbf{x}),t), 
            & d(\mathbf{x})<\epsilon\\
            0 & d(\mathbf{x})\ge \epsilon   
        \end{cases}\label{ext0}
\end{align}
for any $0<\epsilon<\eta(\Omega)$.
$\ext^0$ is clearly linear in $\psi$ by definition. 
$\varphi$ is smooth by the smoothness of distance function $d$ and projection $\pi$.
One can easily obtain the bound (\ref{extensionBound}) for $\ext^0$ by product rule and chain rule in calculus.

We use extension operators to identify $F\in D'((0,T), C^{\infty}(\bar{\Omega}))$ with distributions on the 
space-time boundary $(\partial B)_T$. 
For each $\ext\in \cE$, we define
\begin{equation}
    \begin{aligned}
        \ext^*: D'((0,T), C^{\infty}(\bar{\Omega})) &\to D'((\partial B)_T)\\
        F &\mapsto \ext^*(F)
    \end{aligned}
\end{equation}
as follows:
\begin{align}
    \langle\ext^*(F),\psi\rangle := \langle F,\ext(\psi)\rangle
\end{align}
for all $\psi\in D((\partial B)_T)$.
The linearity and continuity properties of $\ext^*(F)$ follow from those of $\ext\in\cE$, so that $\ext^*(F)\in D'((\partial B)_T)$. 
This identification of functions on the whole domain with distributions in 
$D'((\partial B)_T)$ depends of course on the choice of the extension operator $\ext$.

\subsection{Potential flow solution}\lb{sec:pot} 
The potential-flow solution $\bu_{\phi}$ of the incompressible Euler equation
is given by $\bu_{\phi} = \grad\phi$ for the velocity potential $\phi$ which solves the following Neumann 
problem for the Laplace equation:
\begin{equation}
    \begin{aligned}
        &\Delta\phi = 0\\
        &\frac{\partial\phi}{\partial n}\bigg|_{\partial B} = 0, \;\;\;\; \phi\underset{|\mathbf{x}|\to\infty}{\sim} V(t)x
    \end{aligned}\label{LaplaceEq}
\end{equation}
allowing for the moment a time-dependent external velocity field 
$\bV(t)=V(t)\hat{\bx}.$ The Euler pressure $p_\phi$ is then obtained from the Bernoulli equation
\be \partial_t\phi+ \frac{1}{2}|\grad\phi|^2+p_\phi=c(t) \ee
and is unique up to the arbitrary space-independent constant $c(t).$ For our proofs below it is essential
that $\phi\in C^\infty(\bar{\Omega}\times (0,T))$ and thus possesses smoothness up to the boundary $\partial \Omega.$ 
In the general case of smoothly time-dependent velocity $\bV(t),$  this property will depend upon differentiability 
of the solution of the Neumann problem \eqref{LaplaceEq} in the boundary data. Here, for simplicity, we shall consider
only time-independent flow at infinity, so that the required smoothness follows from the following result:  

\begin{proposition}\lb{prop:pot} 
The Neumann problem (\ref{LaplaceEq}) of the Laplace equation for the case $\bV(t)\equiv\bV$
has a unique time-independent solution $\phi\in C^{\infty}(\Bar{\Omega}).$
\end{proposition}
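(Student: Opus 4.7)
The plan is to reduce to a standard exterior Neumann problem with decaying data and then invoke classical existence, uniqueness, and regularity results for elliptic equations in exterior domains with smooth boundary.

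First I would remove the prescribed behavior at infinity by writing $\phi(\bx)=\bV\bdot\bx+\tilde\phi(\bx)$. Since $\bV$ is constant in time and space, $\bV\bdot\bx$ is harmonic, so $\tilde\phi$ must satisfy
\begin{equation*}
\Delta\tilde\phi=0 \text{ in }\Omega,\qquad \frac{\partial\tilde\phi}{\partial n}\bigg|_{\partial B}=-\bV\bdot\bn,\qquad \tilde\phi(\bx)\to 0\text{ as }|\bx|\to\infty.
\end{equation*}
Because $\bV$ is time-independent, the data is time-independent, so the unique solution (once established) will automatically be time-independent.

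Next I would establish existence and uniqueness of $\tilde\phi$ in a natural function space. The cleanest route is variational: work in the homogeneous (Beppo--Levi) space $\dot H^1(\Omega)=\{u\in L^6(\Omega):\grad u\in L^2(\Omega)\}$, on which the Dirichlet form $a(u,v)=\int_\Omega \grad u\bdot\grad v\,dV$ is coercive in $\mathbb R^3$ thanks to the Hardy/Sobolev inequality $\|u/|\bx|\,\|_{L^2}\lesssim\|\grad u\|_{L^2}$. The Neumann data $-\bV\bdot\bn\in C^\infty(\partial B)$ is smooth on a compact surface, hence defines a bounded linear functional $v\mapsto-\int_{\partial B}(\bV\bdot\bn)\,v\,dA$ on $\dot H^1(\Omega)$ via the trace theorem. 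Lax--Milgram then produces a unique weak solution $\tilde\phi\in\dot H^1(\Omega)$. Uniqueness is immediate: any two solutions differ by a harmonic $\dot H^1$ function with vanishing Neumann data, and integration by parts against itself gives $\|\grad w\|_{L^2(\Omega)}=0$, so $w\equiv 0$ (the constant forced to be zero by $w\in L^6$). An alternative route, which also gives the decay $\tilde\phi=O(1/|\bx|)$ explicitly as the referee might prefer, is to represent $\tilde\phi$ as a single-layer potential on $\partial B$ and apply Fredholm theory to the resulting second-kind boundary integral equation; the nontrivial null-space issue (the constants) is handled because the integral operator $(-\tfrac12 I+K^*)$ is injective on the relevant subspace of $L^2(\partial B)$ by the same $\dot H^1$ uniqueness argument.

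The main obstacle, and the last step, is smoothness up to the boundary. Interior smoothness $\tilde\phi\in C^\infty(\Omega)$ is immediate from Weyl's lemma applied to the harmonic function $\tilde\phi$. For regularity up to $\partial B$, I would invoke classical elliptic boundary regularity for the Neumann problem: since $\partial B$ is a $C^\infty$ manifold, the oblique boundary operator $\partial/\partial n$ satisfies the Lopatinskii condition for $\Delta$, and the Neumann data $-\bV\bdot\bn$ lies in $C^\infty(\partial B)$, so the standard bootstrap (e.g.\ Agmon--Douglis--Nirenberg, or Theorems in Gilbarg--Trudinger Ch.~6 in local boundary charts after flattening $\partial B$) gives $\tilde\phi\in C^{k,\alpha}(\overline{\Omega\cap B_R})$ for every $k$ and every large $R$, hence $\tilde\phi\in C^\infty(\overline\Omega)$. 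Adding back the smooth affine term $\bV\bdot\bx$ yields $\phi\in C^\infty(\overline\Omega)$, and time-independence follows because every object in the construction (domain, boundary operator, data) is independent of $t$, so the unique solution guaranteed by the argument must also be.
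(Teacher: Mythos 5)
Your proof is correct and follows essentially the same route as the paper: the same reduction $\phi=\bV\bdot\bx+\tilde\phi$ to a decaying exterior Neumann problem, time-independence from time-independent data, interior smoothness from harmonicity, and a boundary elliptic-regularity bootstrap on the smooth surface $\partial B$. The only differences are in which standard machinery fills the two middle steps: the paper simply cites classical potential theory for existence and uniqueness of $\tilde\phi\in C^2(\Omega)\cap C^1(\bar\Omega)$ where you run Lax--Milgram in the Beppo--Levi space (a more self-contained but equivalent route), and for boundary regularity the paper localizes with a global cutoff so that $\xi\phi$ solves a homogeneous Neumann problem for a Poisson equation on the bounded domain $B(\bzed,R)\cap\Omega$ and then applies $H^{k+2}$ estimates plus Sobolev embedding, where you instead flatten $\partial B$ in local charts and invoke Agmon--Douglis--Nirenberg/Schauder directly.
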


\begin{proof}
This result is essentially classical but we provide a proof here for completeness. 
By the mapping $\Tilde{\phi}:= \phi - Vx,$ any solution $\phi$ of \eqref{LaplaceEq}
is transformed into a solution $\tilde{\phi}$ of the following inhomogeneous Neumann problem: 
\begin{equation}
    \begin{aligned}
        &\Delta\Tilde{\phi} = 0\\
        &\frac{\partial\Tilde{\phi}}{\partial n}\bigg|_{\partial B} = -V\hat{\bx}\bdot\hat{\bn}, \;\;\;\; \Tilde{\phi}\underset{|\mathbf{x}|\to\infty}{\sim} 0
    \end{aligned}\label{LaplaceEq1}
\end{equation}
From classical potential theory, a unique solution $\Tilde{\phi}\in C^2(\Omega)\cap C^1(\bar{\Omega})$ of (\ref{LaplaceEq1}) exists. See e.g. \cite{medkova2018laplace}, Theorem 6.10.6; \cite{neudert2001asymptotic}, Theorem 2.1. 
Since $\Tilde{\phi}$ is harmonic in the open set $\Omega$, necessarily $\Tilde{\phi}\in C^{\infty}(\Omega)$. See e.g. \cite{evans2010partial}. Thus, $\phi = \Tilde{\phi} + Vx$ is also a smooth function in $C^{\infty}(\Omega)$
and in $C^1(\bar{\Omega}).$

It then suffices to show $\phi$ is smooth in a neighborhood of the boundary $\partial\Omega$.
Since $B$ is compact in $\mathbb{R}^3$, $B\subsetneq B(\bzed,R)$ for some large $R>0$.
Define $U := B(\bzed,R)\cap\Omega$, which has $C^{\infty}$ smooth boundary $\partial U=\partial \Omega\cup \partial B(\bzed,R)$. 
For some sufficiently small $r,$ we have $\Omega_{2r}\subset U$.
To further localize near the boundary, we consider a smooth non-increasing function $\bar{\theta}:\mathbb{R}\to [0,1]$ such that $\bar{\theta} = 1$ in $(-\infty,r]$ and $0$ in $[2r,+\infty)$. We denote $\xi(\bx):=\bar{\theta}(d(\bx))$. As we  reviewed in the introduction, the distance function $d$ is $C^{\infty}$ because of $C^{
\infty}$ smoothness of $\partial\Omega$. Thus, $\xi\in C^\infty(\bar{\Omega}),$ is supported in $\Omega_{2r}$ and $\xi\equiv 1$ in $\Omega_r$.
Since $\phi|_{U}\in C^1(\bar{U})$, then $\xi\phi\in H^1(U)$ is a weak solution of the following homogeneous Neumann problem
for the Poisson equation: 
\begin{align}
    \Delta(\xi\phi) &= \phi\Delta\xi + 2\grad\phi\bdot\grad\xi := f\;\;\;\;\text{ in } U\\
    \frac{\partial(\xi\phi)}{\partial n} &= 0 \;\;\;\;\text{ on } \partial U. 
\end{align}
By definition, $f\in C^{\infty}(\bar{U})$, since $\phi$ is $C^{\infty}$ in $\Omega$ and $\Delta\xi = \grad\xi = 0$ in $\bar{\Omega_r}.$ Furthermore $\int_{U} f\, dV=0.$
Since $\partial U$ is $C^{\infty}$, 
we can deduce from boundary elliptic regularity theory, such as Theorem 4 in section 4.2 of \cite{mikhailov1978pdes}, that 
\begin{align}
    \norm{\phi}_{H^{k+2}(\Omega_{r})}\le\norm{\xi\phi}_{H^{k+2}(U)} &\lesssim \norm{f}_{H^k(U)} <\infty
\end{align}
for all integers $k\ge1$.
Here, the inequalities $\lesssim$ above with constant prefactors depend only on $\Omega, R, r$.
The Sobolev embedding theorem implies $\phi\in C^{k}(\bar{\Omega}_r)$ for any integer $k\ge0$, and thus $\phi\in C^{\infty}(\bar{\Omega}_r)$. Together with the interior smoothness of $\phi$, we conclude that $\phi\in C^{\infty}(\bar{\Omega})$. \hspace{170pt}  \qed
\end{proof}

\section{Proof of Theorems \ref{theorem1}-\ref{theorem3}}\label{thm13}
\subsection{Proof of Theorem \ref{theorem1}}
We take an arbitrary $\varphi\in\Bar{D}(\Bar{\Omega}\times(0,T))$ and let $\psi = \varphi|_{\partial B}$. Testing the interaction energy equation (\ref{Eint-loc}) against $\varphi$ yields
\begin{equation}
    \begin{aligned}
        &-\int_0^T\int_{\Omega}\partial_t\varphi(\mathbf{u}_{\omega}^{\nu}\bdot\mathbf{u}_{\phi})\,dV\,dt\\
        &-\int_0^T\int_{\Omega}\grad\varphi\bdot\left[(\mathbf{u}_{\omega}^{\nu}\bdot\mathbf{u}_{\phi})\mathbf{u}^{\nu}
        +\frac{1}{2}|\mathbf{u}_{\phi}|^2\mathbf{u}_{\omega}^{\nu}+p_{\omega}^{\nu}\mathbf{u}_{\phi}
        +p_{\phi}\mathbf{u}_{\omega}^{\nu}\right]\,dV\,dt\\
        &= \int_0^T\int_{\Omega}\varphi(\bu_{\phi}\bdot\nu\triangle \bv^{\nu})\,dV\, dt
        +\int_0^T\int_{\Omega}\varphi(\grad\mathbf{u}_{\phi}\bdots\mathbf{u}_{\omega}^{\nu}\otimes\mathbf{u}_{\omega}^{\nu})
        \,dV\,dt
    \end{aligned}\label{visc_total_interaction}
\end{equation}
As $\varphi\in \Bar{D}(\Bar{\Omega}\times(0,T))$, there exists a compact subset $K_{\varphi}\subset\mathbb{R}^3$ such that 
\begin{align}
    \text{supp}(\varphi)\subset K_{\varphi}\times(0,T)\subset\Bar{\Omega}\times(0,T)
\end{align}
It follows by Green's identity and stick b.c. for the velocity $\bv^\nu$ that
\begin{eqnarray} 
    && -\int_0^T\int_{\Omega}\varphi\bu_{\phi}\bdot \nu\triangle\bv^{\nu}\,dV\, dt\cr 
    && \hspace{30pt} =-\int_0^T\int_{\partial B}\psi\mathbf{u}_{\phi}\bdot\btau_w^\nu\,dS\,dt
        - \int_0^T\int_{\Omega} \nu \triangle(\varphi\mathbf{u}_{\phi})\bdot\bv^\nu\,dV\,dt
    \label{u-Greens}     
\end{eqnarray} 
with $\btau_w^\nu=\nu\partial\bv^\nu/\partial n.$
We obtain an upper bound by Cauchy-Schwartz
\begin{align}
    &\left|\int_0^T\int_{\Omega} \nu \triangle(\varphi\mathbf{u}_{\phi})\bdot\bv^\nu\,dV\,dt\right|
    \\ \nonumber
    &\hspace{30pt} \le \nu \left(\int_0^T \int_{K_\varphi} |\mathbf{u}^\nu|^2
    \, dV\,dt\right)^{\frac{1}{2}} \cdot \left(\int_0^T\int_{\Omega}|\triangle(\varphi\bv_\phi)|^2\, dV\,dt\right)^{\frac{1}{2}}\label{interactionBoundary}
\end{align}
which vanishes as $\nu\to 0.$ By Proposition \ref{prop:pot}, 
$\bu_{\phi}$ is smooth everywhere in $\Bar{\Omega}\times(0,T)$ and tangential at the boundary $\partial B$, so that 
$\varphi\mathbf{u}_{\phi}\in \Bar{D}(\Bar{\Omega}\times(0,T),\mathbb{R}^3)$, with $(\varphi\mathbf{u}_{\phi})|_{\partial B} = \psi\mathbf{u}_{\phi}|_{\partial B}$ and $\mathbf{u}_{\phi}|_{\partial B}\bdot\mathbf{n}\equiv 0$.
We can thus identify $\psi\mathbf{u}_{\phi}|_{\partial B}$ as a smooth section of the cotangent bundle $D((\partial B)_T,\mathcal{T}^*(\partial B)_T)$, via its graph as the natural embedding
\begin{align}
    \psi\mathbf{u}_{\phi}|_{\partial B}(\bx,t) \mapsto (\bx, t, \psi\mathbf{u}_{\phi}|_{\partial B}(\bx, t), 0). 
\end{align}
Thus by the convergence of skin friction (\ref{tau-lim}), we have
\begin{eqnarray} 
    \int_0^T\int_{\partial B}\psi\mathbf{u}_{\phi}\bdot \btau_w^\nu\,dS\,dt = \langle\btau_w^{\nu},\psi\mathbf{u}_{\phi}|_{\partial B}\rangle
    &&\xrightarrow{\nu\to0}\langle\btau_w,\psi\mathbf{u}_{\phi}|_{\partial B}\rangle\cr
    && \hspace{20pt} = \langle\bu_{\phi}\bdot\btau_w, \psi\rangle.  \label{uphi-tau1} 
\end{eqnarray} 
Combining \eqref{u-Greens}-\eqref{uphi-tau1} gives 
\be -\int_0^T \int_\Omega \varphi\bu_\phi\bdot\nu\triangle\bu^\nu\,dV\,dt
\xrightarrow{\nu\to 0} \langle\bu_{\phi}\bdot\btau_w, \psi\rangle. \ee
Note that for every extension operator $\ext\in\cE$, we can define $\bu_{\phi}\bdot(\nu\grad\btimes\bomega^{\nu})
=-\bu_{\phi}\bdot(\nu\triangle\bv^{\nu})$ as a distribution in $D'((\partial B)_T)$ via 
\begin{align}
    \langle\ext^*(\bu_{\phi}\bdot(\nu\grad\btimes\bomega^{\nu})),\psi\rangle 
    &=-\int_0^T\int_{\Omega}\varphi(\bu_{\phi}\bdot(\nu\triangle \bu^{\nu}))\, dV\, dt
\end{align}
for all $\psi\in D((\partial B)_T)$ with $\varphi = \ext(\psi)\in \bar{D}(\bar{\Omega}\times(0,T))$. Thus, as $\nu\to0$
\begin{align}
    \ext^*(\bu_{\phi}\bdot(\nu\grad\btimes\bomega^{\nu})) \to \bu_{\phi}\bdot\btau_w\;\; in \;\; D'((\partial B)_T)
\end{align}

Just as in Theorem 1 and Lemma 1 of \cite{quan2022inertial}, we can deduce from the assumptions (\ref{L2Conv}-\ref{pBBound}) that 
\begin{align}
    \bu_{\omega}\in L^2((0,T), L^2(\Omega_{\epsilon})), \;\;\;\; p_{\omega}\in L^1((0,T), L^1(\Omega_{\epsilon}))
\label{0nwL2}     
\end{align}
and as $\nu\to0$
\begin{align}
    \norm{(\bu_{\omega}^{\nu}- \bu_{\omega})\otimes \bu_{\phi}}_{L^1((0,T) \times K_{\varphi}))}
        &\le
        \norm{\bu_{\omega}^{\nu} - \bu_{\omega}}_{L^2((0,T),L^2(K_{\varphi}))}\norm{\bu_{\phi}}_{L^2((0,T),L^2(K_{\varphi}))}\nonumber\\
        &\to 0
\end{align}
This implies that as $\nu\to0$
\begin{align}
    -\int_0^T\int_{\Omega}\partial_t\varphi(\mathbf{u}_{\omega}^{\nu}\bdot\mathbf{u}_{\phi})
    \,dV\,dt
    \to
    -\int_0^T\int_{\Omega}\partial_t\varphi(\mathbf{u}_{\omega}\bdot\mathbf{u}_{\phi})
    \,dV\,dt
\end{align}
All of the terms in (\ref{visc_total_interaction}) converge along the same sequence as $\nu\to0$ 
to their inviscid limit, by very similar arguments as above. Thus, we obtain the local balance equation (\ref{weakInteractionLimit1}) for interaction energy. 

Finally, we can write the Josephson-Anderson relation \eqref{dJA-bd} in the form 
\begin{eqnarray} 
&& 
\int_0^T dt \chi(t) {\bf F}^\nu(t)\bdot\bV\,dt = \cr
&& \hspace{30pt} -\rho \int_0^T
\int_\Omega  \chi \grad\bv_\phi\bdots\bv_\omega^\nu\otimes\bv_\omega^\nu\, dV\,dt + 
\rho\langle\bv_\phi\bdot\btau_w^\nu,\chi\otimes 1\rangle
\lb{dJA-bd-2} \end{eqnarray} 
after smearing with $\chi\in {\mathcal D}((0,T)).$ Because the velocity potential $\phi$ 
in Proposition \ref{prop:pot} is harmonic in ${\mathbb R}^3\backslash B,$ it follows that 
$|\grad \bv_\phi|=O(|\bx|^{-3})$ as $|\bx|\to\infty;$ e.g. see \cite{medkova2018laplace},
Proposition 2.17.3. Since also $\phi\in C^\infty(\bar{\Omega}),$ it holds immediately that 
$\grad\bv_\phi\in L^\infty((0,T)\times\Omega).$ Together with assumption 
\eqref{L2Conv-str}, we get that as $\nu\to 0$
$$ \int_0^T
\int_\Omega  \chi \grad\bv_\phi\bdots\bv_\omega^\nu\otimes\bv_\omega^\nu\, dV\,dt
\to \int_0^T
\int_\Omega  \chi \grad\bv_\phi\bdots\bv_\omega\otimes\bv_\omega\, dV\,dt
$$
and the righthand side is well-defined because $\bv_\omega\in L^2((0,T),L^2(\Omega)).$
Using also the convergence (\ref{tau-lim}) of the skin friction $\btau_w^\nu,$ we 
conclude that 
\begin{eqnarray*} 
&& \lim_{\nu\to 0}
\int_0^T dt \chi(t) {\bf F}^\nu(t)\bdot\bV\,dt = \cr
&& \hspace{60pt} -\rho \int_0^T
\int_\Omega  \chi \grad\bv_\phi\bdots\bv_\omega\otimes\bv_\omega\, dV\,dt + 
\rho\langle\bv_\phi\bdot\btau_w,\chi\otimes 1\rangle. 
\qquad \qquad \quad \qed
\lb{dJA-bd-3} \end{eqnarray*}

\subsection{Proof of Theorem \ref{theorem2}}
Fix an arbitrary $\psi\in D((\partial B)_T)$ and let $\varphi = \ext(\psi)$ for some arbitrary $\ext\in \cE$.
Testing the inviscid balance equation \eqref{cgWindowEulerInteraction} of coarse-grained interaction energy 
against $\varphi$ and rearranging yields
\begin{align}
    -&\langle\ext^*(\grad\eta_{h,\ell}\bdot\Bar{\mathbf{J}}_{\phi,\ell}), \psi\rangle\,=\,\int_0^T\int_{\Omega}\varphi(\grad\eta_{h,\ell}\bdot\Bar{\mathbf{J}}_{\phi,\ell})\,dV\,dt
    \label{cgEulerInteractionLocalEq}\\
    \nonumber
    =&\int_0^T\int_{\Omega}[(\partial_t\varphi)\eta_{h,\ell}(\Bar{\mathbf{u}}_{\omega,\ell}\bdot\Bar{\mathbf{u}}_{\phi,\ell})
    +\grad\varphi\bdot(\eta_{h,\ell}\Bar{\mathbf{J}}_{\phi,\ell})]\,dV\,dt\\\nonumber
    +&\int_0^T\int_{\Omega}\varphi\eta_{h,\ell}\grad\Bar{\mathbf{u}}_{\phi,\ell}\bdots(\tau_{\ell}(\bu_{\omega},\bu_{\omega}) +\tau_{\ell}(\bu_{\phi},\bu_{\omega})+\tau_{\ell}(\bu_{\omega},\bu_{\phi}))\,dV\,dt\\\nonumber
    -&\int_0^T\int_{\Omega}\varphi\eta_{h,\ell}\Bar{\mathbf{u}}_{\omega,\ell}\bdot(\grad\bdot\tau_{\ell}(\bu_{\phi},\bu_{\phi}))\,dV\,dt
    +\int_0^T\int_{\Omega}\varphi\eta_{h,\ell}\Bar{\mathbf{u}}_{\omega,\ell}\Bar{\mathbf{u}}_{\omega,\ell}\bdots\grad\Bar{\mathbf{u}}_{\phi,\ell}\,dV\,dt
\end{align}
Our strategy is to prove that all the terms on the right hand side of (\ref{cgEulerInteractionLocalEq}) converge as $h,\ell\to0$ and then to compare with the balance (\ref{weakInteractionLimit1}) obtained in 
Theorem \ref{theorem1}. 

The proof of the convergence relies on the following lemma: 
\begin{lemma}\label{lemma1}
    If $f_i\in L^{p_i}((0,T),L_{\text{loc}}^{p_i}(\Omega))\cap L^{p_i}((0,T),L^{p_i}(\Omega_{\epsilon}))$ for $i=1,\dots k$, with some integer $k$, and $p_i\in[1,\infty]$ and $\sum_{i=1}^k\frac{1}{p_i}=1$, for some arbitrarily small $\epsilon>0$, then for $0<\ell<h$, and some compact set $K\subset\Bar{\Omega}$
    \begin{align}
        \eta_{h,\ell}\prod_{i=1}^k\Bar{f_i}_{\ell}\xrightarrow[L^{1}((0,T),L^1(K))]{h,\ell\to0}\prod_{i=1}^kf_i
    \end{align}
\end{lemma}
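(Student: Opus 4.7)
The plan is to split the difference into a window-function error plus a mollification error,
\begin{align*}
\eta_{h,\ell}\prod_{i=1}^k \bar{f}_{i,\ell} - \prod_{i=1}^k f_i = (\eta_{h,\ell}-1)\prod_{i=1}^k f_i + \eta_{h,\ell}\left(\prod_{i=1}^k \bar{f}_{i,\ell} - \prod_{i=1}^k f_i\right),
\end{align*}
and to show that each piece tends to zero in $L^1((0,T), L^1(K))$. First I would verify $\prod_i |f_i| \in L^1((0,T)\times K)$: the hypotheses give $f_i \in L^{p_i}((0,T), L^{p_i}(K))$ by decomposing $K = (K \cap \Omega_\epsilon) \cup (K \setminus \Omega_\epsilon)$, where the boundary strip piece is controlled by the $\Omega_\epsilon$ hypothesis and the interior piece is a compact subset of $\Omega$ controlled by $L^{p_i}_{\text{loc}}$. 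Hölder in space-time with exponents $p_i$ then gives the $L^1$ claim.

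The window-function error $(\eta_{h,\ell}-1)\prod_i f_i$ is supported on $\{d(\bx) < h+\ell\} \cap K$, whose Lebesgue measure shrinks to zero as $h,\ell \to 0$, and is dominated pointwise by $\prod_i |f_i| \in L^1$; dominated convergence delivers vanishing in the limit. For the mollification error, I would use the telescoping identity
\begin{align*}
\prod_{i=1}^k \bar{f}_{i,\ell} - \prod_{i=1}^k f_i = \sum_{j=1}^k P_j^\ell, \qquad P_j^\ell := \bar{f}_{1,\ell}\cdots\bar{f}_{j-1,\ell}(\bar{f}_{j,\ell}-f_j)f_{j+1}\cdots f_k,
\end{align*}
and bound each $\|P_j^\ell\|_{L^1}$ separately. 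For indices $j$ with $p_j < \infty$, Hölder in space-time with exponents $p_1,\ldots,p_k$ on a slight enlargement $K'\supset K$ does the job: the standard mollification contraction bound keeps $\|\bar{f}_{i,\ell}\|_{L^{p_i}(K')}$ uniformly controlled, while $\|\bar{f}_{j,\ell}-f_j\|_{L^{p_j}((0,T)\times K')} \to 0$ by strong approximation, so $\|P_j^\ell\|_{L^1} \to 0$.

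The main obstacle is the case $p_j = \infty$, where mollifications do not converge in $L^\infty$ in general. For such $j$, I would replace norm convergence by a dominated-convergence argument: $\bar{f}_{j,\ell} \to f_j$ pointwise almost everywhere with the uniform bound $|\bar{f}_{j,\ell}| \le \|f_j\|_{L^\infty}$, while each other mollified factor $\bar{f}_{i,\ell}$ with $i<j$ is pointwise dominated by the Hardy--Littlewood maximal function $Mf_i$. When the remaining exponents satisfy $p_i > 1$, Hölder applied to the maximal envelope produces an $\ell$-independent $L^1$ dominant, and dominated convergence yields $\|P_j^\ell\|_{L^1} \to 0$; in the borderline situation where some other $p_i = 1$, I would instead invoke uniform integrability of the $L^1$-convergent mollifications and conclude via the Vitali convergence theorem. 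Summing the $k$ telescoped terms and combining with the window-error bound completes the proof.
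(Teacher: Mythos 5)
Your proof is correct, and it supplies in full the argument that the paper only gestures at: the paper disposes of this lemma in one sentence, citing Lemma~2 of the prior work and ``induction on $k$.'' Your telescoping sum $\sum_j P_j^\ell$ is precisely that induction unrolled, and your splitting into a window error $(\eta_{h,\ell}-1)\prod_i f_i$ (killed by the shrinking measure of $K\cap\{d<h+\ell\}$ together with absolute continuity of the integral) plus a mollification error is the natural skeleton behind the cited two-factor lemma. Where your write-up adds genuine value is at the endpoints: since $\sum_i 1/p_i=1$ some exponent is typically $\infty$ (indeed in the paper's applications, $\bv_\phi$ and its gradient always enter as $L^\infty$ factors), and plain H\"older plus $L^{p_j}$-convergence of mollifications breaks down there. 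Your fix --- a.e.\ convergence at Lebesgue points, the uniform bound $|\bar f_{j,\ell}|\le\|f_j\|_{L^\infty}$, domination of the remaining mollified factors by maximal functions when all other $p_i>1$, and Vitali/uniform integrability in the borderline case where one $p_i=1$ forces all others to be $\infty$ --- is exactly the content one would need to make the paper's one-line claim rigorous. The only point worth stating explicitly is that all mollified quantities must be read on $\operatorname{supp}\eta_{h,\ell}\subset\{d\ge h\}$ with $\ell<h$, so that $\bar f_{i,\ell}$ only samples $f_i$ at distance $\ge h-\ell>0$ from $\partial\Omega$ and the enlargement $K'$ stays inside $\bar\Omega$, where the hypotheses (split between $\Omega_\epsilon$ and a compact subset of $\Omega$) give $f_i\in L^{p_i}((0,T)\times K')$; you note this implicitly and it causes no difficulty.
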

This is a simple extension of Lemma 2 in \cite{quan2022inertial}, easily proved by induction on $k$. \\
Note that $\bu_{\phi}$ is smooth everywhere in $(\partial B)_T$ and (\ref{L2Conv}-\ref{pBBound}) imply that
\begin{align}
    \bu_{\omega}&\in L^2((0,T), L_{\text{loc}}^2(\Omega))\cap L^2((0,T), L^2(\Omega_{\epsilon}))\\
    p_{\omega}&\in L^1((0,T), L_{\text{loc}}^1(\Omega))\cap L^1((0,T), L^2(\Omega_{\epsilon}))
\end{align}
Thus, we obtain by Lemma \ref{lemma1} that
\begin{align}
    \eta_{h,\ell}\Bar{\mathbf{u}}_{\omega,\ell}\bdot\Bar{\mathbf{u}}_{\phi,\ell}
    &\xrightarrow[L^1((0,T),L^1(K_{\varphi}))]{h,\ell\to0}\mathbf{u}_{\omega}\bdot\mathbf{u}_{\phi}\\
    \eta_{h,\ell}\Bar{\mathbf{u}}_{\omega,\ell}\otimes\Bar{\mathbf{u}}_{\omega,\ell}\bdots\grad\Bar{\mathbf{u}}_{\phi,\ell}
    &\xrightarrow[L^1((0,T),L^1(K_{\varphi}))]{h,\ell\to0}\mathbf{u}_{\omega}\otimes\mathbf{u}_{\omega}\bdots\grad\mathbf{u}_{\phi}
\end{align}
and likewise 
\begin{eqnarray}
    &&\int_0^T\int_{K_\varphi} |\eta_{h,\ell}\tau_{\ell}(\bu_{\omega},\bu_{\omega})|\,dV\,dt
    \,\le\,\int_0^T\int_{K_\varphi}\big|\eta_{h,\ell}\overline{(\mathbf{u}_{\omega}\otimes\mathbf{u}_{\omega})}_\ell - \mathbf{u}_{\omega}\otimes\mathbf{u}_{\omega}\big|\,dV\,dt \cr
    &&\hspace{100pt} 
    + \int_0^T\int_{K_\varphi}\big|\mathbf{u}_{\omega}\otimes\mathbf{u}_{\omega}-\eta_{h,\ell}
    \Bar{\mathbf{u}}_{\omega,\ell}\otimes\Bar{\mathbf{u}}_{\omega,\ell}\big|\,dV\,dt\cr 
    &&\hspace{100pt} \xrightarrow{h,\ell\to0}0
\end{eqnarray}
Similarly,
\begin{align}
    \eta_{h,\ell}\tau_{\ell}(\bu_{\omega},\bu_{\phi})\xrightarrow[L^1((0,T),L^1(K_{\varphi}))]{h,\ell\to0} 0,
    \;\;\;\;
    \eta_{h,\ell}\tau_{\ell}(\bu_{\phi},\bu_{\omega})\xrightarrow[L^1((0,T),L^1(K_{\varphi}))]{h,\ell\to0} 0
\end{align}
Thus, all the terms in (\ref{cgEulerInteractionLocalEq}) that contains these cumulants $\tau_{\ell}(\bu_{\omega},\bu_{\omega}), \tau_{\ell}(\bu_{\omega},\bu_{\phi})$, $\tau_{\ell}(\bu_{\phi},\bu_{\omega})$, vanish in the limit as $h, \ell\to0$. We deduce from these vanishing cumulants and Lemma \ref{lemma1} that
\begin{align}
    \eta_{h,\ell}\Bar{\mathbf{J}}_{\phi, \ell}
    \xrightarrow[L^1((0,T),L^1(K_{\varphi}))]{h,\ell\to0}
    \left[(\mathbf{u}_{\omega}\bdot\mathbf{u}_{\phi})\mathbf{u}
    +p_{\omega}\mathbf{u}_{\phi}
    +\left(\frac{1}{2}|\mathbf{u}_{\phi}|^2+p_{\phi}\right)\mathbf{u}_{\omega}\right]
\end{align}
To obtain a bound on the term containing $\grad\bdot\tau_{\ell}(\bu_{\phi},\bu_{\phi}),$ we use a
general commutator estimate on gradients of cumulants from \cite{drivas2018onsager}, Proposition 4:  
\begin{eqnarray}
\|\grad\bdot\tau_{\ell}(\bu_{\phi},\bu_{\phi})\|_{L^2(0,T)\times (K_\varphi\backslash\Omega_h))}
&\lesssim & \frac{1}{\ell} \sup_{|\br|<\ell}
\|\delta\bu_{\phi}(\br)\|^2_{L^4(0,T)\times (K_\varphi\backslash\Omega_h))} \cr
&=& O(\ell) 
\end{eqnarray} 
where $\delta f(\br; \bx):= f(\bx+\br) - f(\bx)$ and the last bound follows from smoothness 
of $\bv_\phi.$ Using this estimate and the $L^2$-bound \eqref{0nwL2} on $\bv_\omega,$
we see that the term in the equation (\ref{cgEulerInteractionLocalEq}) containing 
$\grad\bdot\tau_{\ell}(\bu_{\phi},\bu_{\phi})$ also vanishes in the limit as $h,\ell\to0$.

From the $L^1$ convergence of all the integrands on the right hand side of the equation (\ref{cgEulerInteractionLocalEq}), we conclude that 
\begin{align}
    & -\langle\ext^*(\grad\eta_{h,\ell}\bdot\Bar{\mathbf{J}}_{\phi,\ell}),\psi\rangle
    \xrightarrow{h,\ell\to0}\\\nonumber
    &\int_0^T\int_{\Omega}\partial_t\varphi(\mathbf{u}_{\omega}\bdot\mathbf{u}_{\phi})
    \,dV\,dt
    +\int_0^T\int_{\Omega}\varphi\mathbf{u}_{\omega}\mathbf{u}_{\omega}\bdots\grad\mathbf{u}_{\phi}
    \,dV\,dt\\\nonumber
    &-\int_0^T\int_{\Omega} \grad\varphi\bdot
    \left[(\mathbf{u}_{\omega}\bdot\mathbf{u}_{\phi})\mathbf{u}
    +p_{\omega}\mathbf{u}_{\phi}
    +(\frac{1}{2}|\mathbf{u}_{\phi}|^2+p_{\phi})\mathbf{u}_{\omega}\right]
    dV\,dt\\\nonumber
    &\hspace{100pt} =\langle\bv_\phi\bdot\btau_w,\psi\rangle
\end{align}
where the final equality follows by comparison with (\ref{weakInteractionLimit1}).
This convergence for all $\psi\in D'((\partial B)_T)$ yields the stated result.

\subsection{Proof of Theorem \ref{theorem3}}
The proof is similar to that for Theorem 3 in \cite{quan2022inertial}, 
by bounding directly the momentum flux: 
\begin{align}
    \grad\eta_{h,\ell}\bdot\Bar{\mathbf{J}}_{\phi,\ell} &= 
    \theta_{h,\ell}'\mathbf{n}
    \bdot\bigg[(\Bar{\mathbf{u}}_{\omega,\ell}\bdot\Bar{\mathbf{u}}_{\phi,\ell})\Bar{\mathbf{u}}_{\ell}+\Bar{p}_{\omega}\Bar{\mathbf{u}}_{\phi,\ell}
    +\left(\frac{1}{2}|\Bar{\mathbf{u}}_{\phi,\ell}|^2+\Bar{p}_{\phi}\right)\Bar{\mathbf{u}}_{\omega,\ell}\\\nonumber
    &+ (\tau_{\ell}(\bu_{\omega},\bu_{\omega}) +\tau_{\ell}(\bu_{\phi},\bu_{\omega})+\tau_{\ell}(\bu_{\omega},\bu_{\phi}))\bdot\Bar{\mathbf{u}}_{\phi,\ell}\bigg]
\end{align}
which is supported in $\Omega_{h+\ell}\backslash\Omega_h\subset\Omega_{3h}\subset\Omega_{\epsilon}$. 
For all $\mathbf{x}\in\Omega_{h+\ell}\backslash\Omega_h$, a.e. $t\in (0,T)$, we write
\begin{align}
    \mathbf{n}(\pi(\mathbf{x}))\bdot\Bar{\mathbf{u}}_{\ell}(\mathbf{x},t) 
    &= \int_{\mathbb{R}^3}G_{\ell}(\mathbf{r})[\mathbf{n}(\pi(\mathbf{x}))-\mathbf{n}(\pi(\mathbf{x} + \mathbf{r}))]
    \bdot\mathbf{u}(\mathbf{x} + \mathbf{r},t)\, V(d\mathbf{r})\\\nonumber 
    &+\int_{\mathbb{R}^3}G_{\ell}(\mathbf{r})\mathbf{n}(\pi(\mathbf{x} + \mathbf{r}))
    \bdot\mathbf{u}(\mathbf{x} + \mathbf{r},t)\, V(d\mathbf{r}) 
\end{align}
Since $\mathbf{n}\circ\pi$ is smooth in $\bar{\Omega}_{\epsilon}$, $\forall\delta>0$, $\exists\rho=\rho(\delta)>0$ s.t.
\begin{align}
    |\mathbf{n}(\pi(\mathbf{x}))-\mathbf{n}(\pi(\mathbf{x} + \mathbf{r}))|\le\delta
\end{align}
for all $\mathbf{x}\in\Omega_{h+\ell}\backslash\Omega_{h}$ and $|\br|<\ell<\rho$. Then it follows that
\begin{align}
    |\mathbf{n}(\pi(\mathbf{x}))
    \bdot\Bar{\mathbf{u}}_{\ell}(\mathbf{x},t)|
    &\le\delta\norm{\mathbf{u}(t)}_{L^{\infty}(\Omega_{\epsilon})} 
    + \norm{\mathbf{n}\bdot\mathbf{u}(t)}_{L^{\infty}(\Omega_{3h})}
    \label{ubarn-bd}
\end{align}
Obviously 
\begin{align}
    \norm{\mathbf{u}(t)}_{L^{\infty}(\Omega_{\epsilon})} \le \norm{\mathbf{u}_{\omega}(t)}_{L^{\infty}(\Omega_{\epsilon})}+\norm{\mathbf{u}_{\phi}(t)}_{L^{\infty}(\Omega_{\epsilon})}
\end{align}
and thus  $\bu\in L^2((0,T), L^{\infty}(\Omega_{\epsilon})).$ Bounds analogous 
to \eqref{ubarn-bd} hold for $\bv_\omega,$ $\bv_\phi.$ 

Recalling the definition of $\tau_{\ell}(\bu_{\omega}, \bu_{\omega})$ we obtain 
in similar fashion that
\begin{align}
    \tau_{\ell}(\bu_{\omega},\bu_{\omega})\bdot\mathbf{n}(\pi(\mathbf{x}))
    &\le\left(\delta\norm{\mathbf{u}_{\omega}(t)}_{L^{\infty}(\Omega_{\epsilon})} + \norm{\mathbf{n}\bdot\mathbf{u}_{\omega}(t)}_{L^{\infty}(\Omega_{3h})}\right)\norm{\mathbf{u}_{\omega}(t)}_{L^{\infty}(\Omega_{\epsilon})}
\end{align}
and likewise that 
\begin{equation}
    \begin{aligned}
        \tau_{\ell}(\bu_{\phi},\bu_{\omega})\bdot\mathbf{n}(\pi(\mathbf{x}))
        &\le\left(\delta\norm{\mathbf{u}_{\omega}(t)}_{L^{\infty}(\Omega_{\epsilon})} + \norm{\mathbf{n}\bdot\mathbf{u}_{\omega}(t)}_{L^{\infty}(\Omega_{3h})}\right)\norm{\mathbf{u}_{\phi}(t)}_{L^{\infty}(\Omega_{\epsilon})}\\
        \tau_{\ell}(\bu_{\omega},\bu_{\phi})\bdot\mathbf{n}(\pi(\mathbf{x}))
        &\le\left(\delta\norm{\mathbf{u}_{\phi}(t)}_{L^{\infty}(\Omega_{\epsilon})} + \norm{\mathbf{n}\bdot\mathbf{u}_{\phi}(t)}_{L^{\infty}(\Omega_{3h})}\right)\norm{\mathbf{u}_{\omega}(t)}_{L^{\infty}(\Omega_{\epsilon})}
    \end{aligned}
\end{equation}

Consider then some $\psi\in D((\partial B)_T)$ and $\ext\in\cE$, and let $\varphi = \ext(\psi)\in \Bar{D}(\Bar{\Omega}\times(0,T))$.
Using the bounds above, together with $\norm{\theta_{h,\ell}'(d(\mathbf{x}))}_{L^{\infty}}\le\frac{C}{\ell}$ 
and $|\Omega_{h+\ell}\backslash\Omega_h|\le C'\ell$, we obtain that
\begin{equation}
    \begin{aligned}
        \Bigg|\int_0^T\int_{\Omega}\varphi\theta_{h,\ell}'&(\Bar{\mathbf{u}}_{\omega,\ell}\bdot\Bar{\mathbf{u}}_{\phi,\ell})(\Bar{\mathbf{u}}_{\ell}\bdot\bn)\,dV\, dt \Bigg| \\
        &\lesssim\norm{\psi}_{L^{\infty}((\partial B)_T)}\int_0^T\int_{\Omega_{h+l}\backslash\Omega_h}|\theta_{h,\ell}'(\Bar{\mathbf{u}}_{\omega,\ell}\bdot\Bar{\mathbf{u}}_{\phi,\ell})(\Bar{\mathbf{u}}_{\ell}\bdot\bn)|\,dV\,dt\\
        &\lesssim\norm{\psi}_{L^{\infty}((\partial B)_T)}
        \times\norm{\mathbf{u}_{\omega}}_{L^2((0,T),L^{\infty}(\Omega_{\epsilon}))}
        \times\norm{\mathbf{u}_{\phi}}_{L^{\infty}(\Omega_{\epsilon}\times(0,T))}\\
        &\hspace{10pt} \times\left[\delta\norm{\mathbf{u}}_{L^2((0,T),L^{\infty}(\Omega_{\epsilon}))} 
        + \norm{\mathbf{n}\bdot\mathbf{u}}_{L^2((0,T),L^{\infty}(\Omega_{3h}))}\right]
    \end{aligned}
    \label{Jphi-bd1} 
\end{equation}
Similarly,
\begin{equation}
    \begin{aligned}
        \Bigg|\int_0^T\int_{\Omega}&\varphi\theta_{h,\ell}'\left(\frac{1}{2}|\Bar{\mathbf{u}}_{\phi,\ell}|^2+\Bar{p}_{\phi}\right)(\Bar{\mathbf{u}}_{\omega,\ell}\bdot\bn)\,dV\, dt\Bigg|\\
        &\lesssim\norm{\psi}_{L^{\infty}((\partial B)_T)}
        \times\left[\frac{1}{2}\norm{\mathbf{u}_{\phi}}_{L^{\infty}(\Omega_{\epsilon}\times(0,T))}^2
        +\norm{p_{\phi}}_{L^{\infty}(\Omega_{\epsilon}\times(0,T))}\right]\\
        &\hspace{10pt}\times\left[\delta\norm{\mathbf{u}_{\omega}}_{L^2((0,T),L^{\infty}(\Omega_{\epsilon}))} 
        + \norm{\mathbf{n}\bdot\mathbf{u}_{\omega}}_{L^2((0,T),L^{\infty}(\Omega_{3h}))}\right]
    \end{aligned}
\end{equation}
\begin{equation}
    \begin{aligned}
        &\Bigg|\int_0^T\int_{\Omega}\varphi\theta_{h,\ell}'\bn\bdot(\tau_{\ell}(\bu_{\omega},\bu_{\omega}) +\tau_{\ell}(\bu_{\phi},\bu_{\omega})+\tau_{\ell}(\bu_{\omega},\bu_{\phi}))\bdot\Bar{\mathbf{u}}_{\phi,\ell}\,dV\, dt\Bigg|\\
        &\lesssim\norm{\psi}_{L^{\infty}((\partial B)_T)}\times
        \bigg[
        \left(\delta\norm{\mathbf{u}_{\omega}}_{L^2((0,T),L^{\infty}(\Omega_{\epsilon}))} 
        + \norm{\mathbf{n}\bdot\mathbf{u}_{\omega}}_{L^2((0,T),L^{\infty}(\Omega_{3h}))}\right)\\
        &\hspace{20pt} \times\norm{\mathbf{u}_{\phi}}_{L^{\infty}(\Omega_{\epsilon}\times(0,T))}\times
        \left(\norm{\mathbf{u}_{\omega}}_{L^2((0,T),L^{\infty}(\Omega_{\epsilon}))} + \norm{\mathbf{u}_{\phi}}_{L^{\infty}(\Omega_{\epsilon}\times(0,T))}\right)\\
        &\hspace{80pt}+\left(\delta\norm{\mathbf{u}_{\phi}}_{L^{\infty}(\Omega_{\epsilon}\times(0,T))} + \norm{\mathbf{n}\bdot\mathbf{u}_{\phi}}_{L^{\infty}(\Omega_{3h}\times (0,T))}\right)\\
        &\hspace{120pt}\times
        \norm{\mathbf{u}_{\omega}}_{L^2((0,T),L^{\infty}(\Omega_{\epsilon}))} \norm{\mathbf{u}_{\phi}}_{L^{\infty}(\Omega_{\epsilon}\times(0,T))}
        \bigg]
    \end{aligned}
\end{equation}
\begin{equation}
    \begin{aligned}
        \Bigg|\int_0^T\int_{\Omega}\varphi\theta_{h,\ell}'\Bar{p}_{\omega}(\Bar{\mathbf{u}}_{\phi,\ell}\bdot\bn)\, dV\, dt\Bigg|
        &\lesssim\norm{\psi}_{L^{\infty}((\partial B)_T)}\times\norm{p_{\omega}}_{L^1((0,T), L^{\infty}(\Omega_{\epsilon}))}\\
        &\times\left(\delta\norm{\mathbf{u}_{\phi}}_{L^{\infty}(\Omega_{\epsilon}\times(0,T))} + \norm{\mathbf{n}\bdot\mathbf{u}_{\phi}}_{L^{\infty}(\Omega_{3h}\times (0,T)}\right)
    \end{aligned}
    \label{Jphi-bd4} 
\end{equation}
Thus, by the assumptions on the near-wall boundedness of $\mathbf{u}_{\omega}, p_{\omega}$ (\ref{wallboundedness}), the no-flow-through condition (\ref{wallnormal}) for $\bv_\omega$,
and corresponding properties of $\bv_\phi,$
\be \lim_{h,\ell\to 0}
\int_0^T \int_\Omega \varphi \grad\eta_{h,\ell}\bdot\Bar{\mathbf{J}}_{\phi,\ell}\,dV\, dt=0. \ee 
from which we conclude that 
\begin{align}
    \lim_{h,\ell\to0}\ext^*(\grad\eta_{h,\ell}\bdot\Bar{\mathbf{J}}_{\phi,\ell}) = 0 \quad \text{ in } D'((\partial B)_T)
\end{align}
and thus $\bu_{\phi}\bdot\btau_w = 0$ by Theorem \ref{theorem2}. More generally, all of 
the above bounds are valid for any $\varphi\in\Bar{D}(\Bar{\Omega}\times(0,T))$ with 
$\psi:=\left.\varphi\right|_{(\partial B)_T}$ and the local balance equation 
(\ref{weakInteractionLimit2}) then follows directly from \eqref{weakInteractionLimit1}
in Theorem \ref{theorem1}. Finally, with the further assumption \eqref{L2Conv-str} of 
global $L^2$-convergence, then \eqref{dJA-zero} of Theorem \ref{theorem1} holds and 
reduces to \eqref{dJA-zero2} since $\bu_{\phi}\bdot\btau_w = 0.$ 

\section{Proof of Theorems \ref{theorem4}-\ref{theorem6}}\label{thm46}

\subsection{Proof of Theorem \ref{theorem4}}
From the uniform bound \eqref{pL32Bd}, we can use the same type of diagonal argument as 
in \cite{quan2022inertial}, Remark 2, to extract a subsequence $\nu_j\to 0$ so that 
$p^{\nu_j}_\omega\to p_\omega\in L^{\frac{3}{2}}((0,T),L^{\frac{3}{2}}(\Omega))$
distributionally, with $(\bv_\omega,p_\omega)$ a weak solution of \eqref{E-omega-mom2}. 
We then take an arbitrary $\varphi\in\Bar{D}(\Bar{\Omega}\times(0,T))$ with $\psi = \varphi|_{\partial B}$, and test 
the energy equation (\ref{Eom-loc}) of the rotational flow to obtain:  
\begin{equation}
    \begin{aligned}
        -\int_0^T\int_{\Omega}&\frac{1}{2}\partial_t\varphi|\mathbf{u}_{\omega}^{\nu}|^2
        \,dV\,dt
        -\int_0^T\int_{\Omega}\grad\varphi\bdot\left[\frac{1}{2}|\mathbf{u}_{\omega}^{\nu}|^2\mathbf{u}^{\nu}+p_{\omega}^{\nu}\mathbf{u}_{\omega}^{\nu}\right]dV\,dt\\
        &=\int_0^T\int_{\Omega}\varphi\grad\bdot(\nu\mathbf{u}_{\omega}^{\nu}\btimes\bomega)
        \,dV\,dt
        -\int_0^T\int_{\Omega}\nu\varphi|\bomega^{\nu}|^2\,dV\,dt\\
        &-\int_0^T\int_{\Omega}\varphi\grad\mathbf{u}_{\phi}\bdots\mathbf{u}_{\omega}^{\nu}\otimes\mathbf{u}_{\omega}^{\nu}\,dV\,dt
    \end{aligned}\label{rotationTest}
\end{equation}
Note that
\begin{equation}
    \begin{aligned}
        \int_0^T\int_{\Omega}&\varphi\grad\bdot(\nu\mathbf{u}_{\omega}^{\nu}\btimes\bomega^{\nu})
        \,dV\,dt\\ 
        &= -\int_0^T\int_{\Omega}\varphi\grad\bdot(\nu\mathbf{u}_{\phi}\btimes\bomega^{\nu})
        \,dV\,dt 
        + \int_0^T\int_{\Omega}\varphi\grad\bdot(\nu\mathbf{u}^{\nu}\btimes\bomega^{\nu})
        \,dV\,dt
    \end{aligned}
\end{equation}
Because of the identity $\grad\bdot(\bv^\nu\btimes\bomega^\nu)=\grad\otimes\grad\bdots(\bv^\nu\otimes\bv^\nu)
-\triangle\left(\frac{1}{2}|\bv^\nu|^2\right),$ we have by integration by parts and the no-slip condition on $\bv^\nu$ that 
\begin{align}
    &\int_0^T\int_{\Omega}\varphi\grad\bdot(\nu\mathbf{u}^{\nu}\btimes\bomega^{\nu})
    \,dV\,dt
    \\\nonumber 
    &= -\nu \int_0^T\int_{\Omega}
    \left[ (\grad\otimes\grad)\varphi\bdots\bv^{\nu}\otimes\bv^{\nu}
    -(\triangle\varphi)\,\frac{1}{2}|\bv^\nu|^2\right]dV\,dt
\end{align}
It follows that 
\begin{eqnarray}  
    \left|\int_0^T\int_{\Omega}\varphi\grad\bdot(\nu\mathbf{u}^{\nu}\btimes\bomega^{\nu})
    \,dV\,dt
    \right|
    &\lesssim&  \nu \norm{\grad\otimes\grad\varphi}_{L^\infty((0,T)\times\Omega)} \norm{\bv^\nu}^2_{L^2((0,T),L^2(K_\varphi))} \cr
    & \xrightarrow{\nu\to0}& 0 
\end{eqnarray}

Next, from Theorem \ref{theorem1}, we have
\be
    \int_0^T\int_{\Omega} \varphi\grad\bdot(\nu\mathbf{u}_{\phi}\btimes\bomega^{\nu})\,dV\,dt
    \xrightarrow{\nu\to0}-\langle\bu_{\phi}\bdot\btau_w,\psi\rangle
\ee 
The convergence of all of the rest of the terms in the equation (\ref{rotationTest}) can be easily deduced by Lemma 1 
in \cite{quan2022inertial} in conjunction with the assumptions (\ref{L3Conv}-\ref{pL32Boundary}), except for the 
term involving $Q^\nu.$ We thus deduce from (\ref{rotationTest}) that the limit 
\eqref{viscDissLimit2} of $\langle Q^\nu,\varphi\rangle$ must exist as well with 
\begin{eqnarray} 
        && 
        \langle Q,\varphi\rangle :=
        \int_0^T\int_{\Omega}\partial_t\varphi\cdot \frac{1}{2}|\mathbf{u}_{\omega}|^2\,dV\, dt
        \int_0^T\int_{\Omega}\grad\varphi\bdot\left[\frac{1}{2}|\mathbf{u}_{\omega}|^2\mathbf{u}+p_{\omega}\mathbf{u}_{\omega}\right]
        dV\,dt\cr 
        && \hspace{50pt} +\langle\bv_\phi\bdot \btau_w,\psi\rangle
        -\int_0^T\int_{\Omega}\varphi\grad\mathbf{u}_{\phi}\bdots\mathbf{u}_{\omega}\otimes\mathbf{u}_{\omega}\,dV\,dt.
        \label{Qdef}
    \end{eqnarray} 
From this definition, the balance equation (\ref{fgRotationalLimit2}) trivially follows. Furthermore,
$Q$ defined by \eqref{Qdef} is clearly a linear functional on $\Bar{D}(\Bar{\Omega}\times (0,T)).$ 
Finally, since $\langle Q^\nu,\varphi\rangle\geq 0$ for all $\nu>0$ when $\varphi\geq 0,$ then 
the limit functional also satisfies the corresponding inequality $\langle Q,\varphi\rangle\geq 0.$ Thus, $Q$ is non-negative. 

\subsection{Proof of Theorem \ref{theorem5}}
We take an arbitrary $\varphi\in\Bar{D}(\Bar{\Omega}\times(0,T))$ with $\psi = \varphi|_{\partial B}$.\\
Testing the inviscid coarse-grained balance equation \eqref{cgwEulerRotationalEq2} for energy 
in rotational flow against $\varphi$ and rearranging yields
\begin{equation}
    \begin{aligned}
        &\int_0^T\int_{\Omega}\varphi\grad\eta_{h,\ell}\bdot\Bar{\bJ}_{\omega, \ell}\,dV\,dt+\int_0^T\int_{\Omega}\varphi\eta_{h,\ell}\grad\Bar{\mathbf{u}}_{\omega,\ell}
        \bdots \tau_{\ell}(\bu_{\omega},\bu_{\omega})\,dV\,dt\\
        &=- \int_0^T\int_{\Omega}\partial_t\varphi\cdot\frac{1}{2}\eta_{h,\ell}|\Bar{\mathbf{u}}_{\omega,\ell}|^2 \,dV\,dt
        -\int_0^T\int_{\Omega}\grad\varphi\bdot\eta_{h,\ell}\Bar{\bJ}_{\omega, \ell}\, dV\,dt\\
        &\hspace{10pt} +\int_0^T\int_{\Omega}\varphi\eta_{h,\ell}\grad\Bar{\mathbf{u}}_{\phi,\ell}
        \bdots \Bar{\mathbf{u}}_{\omega,\ell}\otimes\Bar{\mathbf{u}}_{\omega,\ell}\,dV\,dt\\
        &\hspace{10pt} -\int_0^T\int_{\Omega}\varphi\eta_{h,\ell}\grad\Bar{\mathbf{u}}_{\omega,\ell}
        \bdots(\tau_{\ell}(\bu_{\phi},\bu_{\omega})+\tau_{\ell}(\bu_{\omega},\bu_{\phi}))\,dV\,dt
    \end{aligned}\label{cgEulerRotationalEq}
\end{equation}
Let $K_{\varphi}\subset\mathbb{R}^3$ be a compact set such that
$
    \text{supp}(\varphi)\subset K_{\varphi}\times(0,T)\subset\Bar{\Omega}\times(0,T)
$
Similar to the proof of Theorem \ref{theorem2}, one can use Lemma \ref{lemma1} to obtain 
\begin{align}
    \frac{1}{2}\eta_{h,\ell}|\Bar{\mathbf{u}}_{\omega,\ell}|^2\xrightarrow[L^{\frac{3}{2}}((0,T),L^{\frac{3}{2}}(K_{\varphi}))]{h,\ell\to0}\frac{1}{2}|\mathbf{u}_{\omega}|^2, \;\;\;\; &\eta_{h,\ell}\Bar{p}_{\omega, \ell}\Bar{\mathbf{u}}_{\omega,\ell}\xrightarrow[L^1((0,T),L^1(K_{\varphi}))]{h,\ell\to0}p_{\omega}\mathbf{u}_{\omega}\label{thm5-conv1}
\end{align}
and
\begin{align}
    \frac{1}{2}\eta_{h,\ell}|\Bar{\mathbf{u}}_{\omega,\ell}|^2\Bar{\mathbf{u}}
    &\xrightarrow[L^1((0,T),L^1(K_{\varphi}))]{h,\ell\to0}\frac{1}{2}|\mathbf{u}_{\omega}|^2\mathbf{u}\label{thm5-conv2}\\
    \eta_{h,\ell}\grad\Bar{\mathbf{u}}_{\phi,\ell}\bdots\Bar{\mathbf{u}}_{\omega,\ell}\otimes\Bar{\mathbf{u}}_{\omega,\ell}
    &\xrightarrow[L^1((0,T),L^1(K_{\varphi}))]{h,\ell\to0}\grad\mathbf{u}_{\phi}\bdots\mathbf{u}_{\omega}
    \otimes\mathbf{u}_{\omega}\label{thm5-conv3}\\
    \Bar{\bJ}_{\omega,l}
    &\xrightarrow[L^1((0,T),L^1(K_{\varphi}))]{h,\ell\to0}\frac{1}{2}|\bu_{\omega}|^2\bu + p_{\omega}\bu_{\omega}\label{thm5-conv4}
\end{align}

Next we note that 
\begin{align}
    \lim_{h,\ell\to0}\int_0^T\int_{\Omega}\varphi\eta_{h,\ell}\grad\Bar{\mathbf{u}}_{\omega,\ell}
    \bdots\tau_{\ell}(\bu_{\omega},\bu_{\phi})\, dV\,dt = 0\label{thm5-conv5}
\end{align}
\begin{align}
    \lim_{h,\ell\to0}\int_0^T\int_{\Omega}\varphi\eta_{h,\ell}\grad\Bar{\mathbf{u}}_{\omega,\ell}
    \bdots\tau_{\ell}(\bu_{\phi},\bu_{\omega})\,dV\,dt = 0\label{thm5-conv6}
\end{align}
by a standard density argument, based on the Constantin-E-Titi identities \cite{constantin1994onsager}: 
\begin{align}
    \grad\Bar{\bu}_{\omega,\ell}(\bx) = -\frac{1}{\ell}\langle (\grad G)_{\ell}\delta \bu_{\omega}(\cdot;\bx)\rangle 
\end{align}
\begin{align}
    \tau_{\ell}(\bu_{\omega},\bu_{\phi})(\bx) = \langle G_{\ell}\delta\bu_{\omega}(\cdot;\bx)\otimes\delta\bu_{\phi}(\cdot;\bx)\rangle
    - \langle G_{\ell}\delta\bu_{\omega}(\cdot;\bx)\rangle \otimes \langle G_{\ell}\delta\bu_{\phi}(\cdot;\bx)\rangle\label{cet-decomposition}
\end{align}
where we use $\langle\cdot\rangle$ as an abbreviated notation for spatial integral over $\mathbb{R}^3$.
We give here a few details, discussing only the term 
\begin{align}
    \frac{1}{\ell}\int_0^T\int_{\Omega}\eta_{h,\ell}\varphi
    \langle (\grad G)_{\ell}\delta\bu_{\omega}(\cdot;\bx)\rangle \bdots 
    \langle G_{\ell}\delta\bu_{\omega}(\cdot;\bx)\otimes\delta\bu_{\phi}(\cdot;\bx)\rangle
    dV\,dt,\label{thm5-integral}
\end{align}
as all others can be treated in the same way. Since $\mathbf{u}_{\omega}\in L^3((0,T),L^3(K_{\varphi}))$
from \eqref{L3Conv},\eqref{uL3Boundary}, then for all $\epsilon>0$, there exists 
$\mathbf{v}^{\epsilon}\in C^{\infty}((0,T)\times \text{Int}(K_{\varphi}))$ such that $\norm{\mathbf{u}_{\omega}-\mathbf{v}^{\epsilon}}_{L^3((0,T),L^3(K_{\varphi}))}<\epsilon$. 
Defining $\mathbf{\Delta}^{\epsilon} := \mathbf{u}_{\omega}-\mathbf{v}^{\epsilon}$,
we can then substitute $\mathbf{u}_{\omega}=\mathbf{v}^\epsilon+\mathbf{\Delta}^\epsilon$ into 
\eqref{thm5-integral} to obtain two integrals. The first satisfies the bound 
\begin{equation}
    \begin{aligned}
        &\left|\frac{1}{\ell}\int_0^T\int_{\Omega}\eta_{h,\ell}\varphi
        \langle (\grad G)_{\ell}\delta\bu_{\omega}(\cdot;\bx)\rangle)\bdots 
        \langle G_{\ell}\delta\mathbf{v}^{\epsilon}(\cdot;\bx)\otimes\delta\bu_{\phi}(\cdot;\bx)\rangle
        dV\,dt
        \right|\\
        &\lesssim \ell\norm{\varphi}_{L^{\infty}(\mathbb{R}^3\times(0,T))}
        \norm{\mathbf{u}_{\omega}}_{L^{3}((0,T),L^{3}(K_{\varphi}))}
        \norm{\grad\mathbf{v}^{\epsilon}}_{L^{\infty}(K_{\varphi}\times(0,T))}\norm{\grad\mathbf{u}_{\phi}}_{L^{\infty}(K_{\varphi}\times(0,T))}
    \end{aligned}\label{boundA}
\end{equation}
where the factor $\ell$ is obtained from the smoothness of $\mathbf{v}^{\epsilon}$ and $\bu_{\phi}$. 
Here, $\lesssim$ is inequality with prefactor depending on $K_{\varphi}\times(0,T)$ and the mollifier $G$. 
Thus, the first integral vanishes as $h,\ell\to0$, with $\epsilon$ fixed. The second satisfies 
\begin{equation}
    \begin{aligned}
        &\left|\frac{1}{\ell}\int_0^T\int_{\Omega}\eta_{h,\ell}\varphi
        \langle (\grad G)_{\ell}\delta\bu_{\omega}(\cdot;\bx)\rangle\bdots
        \langle G_{\ell}\delta\mathbf{\Delta}^{\epsilon}(\cdot;\bx)\otimes\delta\bu_{\phi}(\cdot;\bx)\rangle
        dV\,dt\right|\\
        &\lesssim\epsilon\norm{\varphi}_{L^{\infty}(\mathbb{R}^3\times(0,T))}
        \norm{\mathbf{u}_{\omega}}_{L^{3}((0,T),L^{3}(K_{\varphi}))}
        \norm{\grad\mathbf{u}_{\phi}}_{L^{\infty}(K_{\varphi}\times(0,T))}
    \end{aligned}\label{boundB}
\end{equation}
for any $\epsilon>0.$  Thus, we obtain that (\ref{thm5-integral}) vanishes in the limit of $h,\ell\to0$. 

Since the terms considered in (\ref{thm5-conv1})-(\ref{thm5-conv4}) all converge in $L^1((0,T),L^1(K_{\varphi}))$, 
we conclude from \eqref{cgEulerRotationalEq} that 
\begin{equation}
    \begin{aligned}
        &\lim_{h,\ell\to0}\int_0^T\int_{\Omega}\varphi\left(\grad\eta_{h,\ell}\bdot\Bar{\bJ}_{\omega, \ell} + \eta_{h,\ell}\grad\Bar{\mathbf{u}}_{\omega,\ell}\bdots\tau_{\ell}(\bu_{\omega},\bu_{\omega})\right)
        dV\,dt\\
        =&- \int_0^T\int_{\Omega}\partial_t\varphi\cdot \frac{1}{2}|\mathbf{u}_{\omega}|^2 \,dV\,dt
        +\int_0^T\int_{\Omega}\varphi\grad\mathbf{u}_{\phi}\bdots\mathbf{u}_{\omega}\otimes\mathbf{u}_{\omega}\,dV\,dt\\
        &\hspace{30pt} -\int_0^T\int_{\Omega}\grad\varphi\bdot\left[\frac{1}{2}|\mathbf{u}_{\omega}|^2\mathbf{u}
        +p_{\omega}\mathbf{u}_{\omega}\right]dV\,dt
    \end{aligned}
\end{equation}
for all $\varphi\in\Bar{D}(\Bar{\Omega}\times(0,T))$.
By comparison with the inviscid energy balance equation (\ref{fgRotationalLimit2}) for 
rotational flow  in Theorem \ref{theorem4}, we obtain (\ref{Jom-bal}).

\subsection{Proof of Theorem \ref{theorem6}} The near-wall boundedness property (\ref{wallboundednessL3}) and no-flow-through condition (\ref{wallnormalL3}) imply those in Theorem \ref{theorem3}, which immediately gives 
\begin{align}
    \bu_{\phi}\bdot\btau_w = 0 \quad \mbox{ in $D'((\partial B)_T).$} 
\end{align}
Thus, the local balance equation for rotational energy (\ref{fgRotationalLimit2}) reduces to
\begin{eqnarray} 
    && 
    -\int_0^T\int_{\Omega}\frac{1}{2}\partial_t\varphi|\mathbf{u}_{\omega}|^2\,dV\, dt
    -\int_0^T\int_{\Omega}\grad\varphi\bdot\left[\frac{1}{2}|\mathbf{u}_{\omega}|^2\mathbf{u}+p_{\omega}\mathbf{u}_{\omega}\right]
    dV\,dt\cr 
    && \hspace{10pt} =
    -\int_0^T\int_{\Omega}\varphi Q \,dV\,dt -\int_0^T\int_{\Omega}\varphi\grad\mathbf{u}_{\phi}\bdots\mathbf{u}_{\omega}\otimes\mathbf{u}_{\omega}\,dV\,dt
\end{eqnarray}

Using the same arguments as in the proof of Theorem \ref{theorem3}, one can obtain upper bounds on the spatial flux of kinetic energy of the rotational flow, defined as
\begin{equation}
    \begin{aligned}
        \grad\eta_{h,\ell}\bdot\Bar{\mathbf{J}}_{\omega, \ell}(\bx) &= \theta'_{h,\ell}(d(\mathbf{x}))\bigg(\frac{1}{2}|\Bar{\mathbf{u}}_{\omega,\ell}|^2
        \Bar{\mathbf{u}}_\ell\bdot\mathbf{n}(\pi(\mathbf{x}))+\Bar{p}_{\omega, \ell}\Bar{\mathbf{u}}_{\omega,\ell}\bdot\mathbf{n}(\pi(\mathbf{x}))\\
        &+\mathbf{n}(\pi(\mathbf{x}))\bdot(\tau_{\ell}(\bu_{\omega},\bu_{\omega})+\tau_{\ell}(\bu_{\phi},\bu_{\omega})+\tau_{\ell}(\bu_{\omega},\bu_{\phi}))\bdot\Bar{\mathbf{u}}_{\omega,\ell}\bigg), 
    \end{aligned}
    \label{Jom-flux}
\end{equation}
which are analogous to the bounds \eqref{Jphi-bd1}-\eqref{Jphi-bd4} on the spatial flux of 
the interaction energy. We need estimates $L^3$ in time here since \eqref{Jom-flux} is cubic in $\bv_\omega.$
It then follows from these upper bounds and the assumptions (\ref{wallboundednessL3}), 
\eqref{wallnormalL3} that 
\begin{align}
    \lim_{h,\ell\to0}\int_0^T\int_{\Omega}\varphi\grad\eta_{h,\ell}\bdot\Bar{\bJ}_{\omega, \ell} 
    \, dV\,dt = 0
\end{align}
for all $\varphi\in\Bar{D}(\Bar{\Omega}\times(0,T))$ and this implies that for all $\ext\in\cE$,
\begin{align}
    \lim_{h,\ell\to0}\ext^*[\grad\eta_{h,\ell}\bdot\Bar{\bJ}_{\omega,\ell}] = 0.
\end{align}
By comparison with (\ref{Jom-bal}), we obtain the equality (\ref{roughnessDissipative}) between inertial energy dissipation and zero-viscosity limit of viscous energy dissipation.

\begin{acknowledgements} 
We are grateful to Samvit Kumar, Charles Meneveau, and Tamer Taki for discussions of this problem. 
This work was funded by the Simons Foundation, via Targeted Grant in MPS-663054.
\end{acknowledgements}

\bibliographystyle{plain}
\bibliography{bibliography}

\end{document}